\documentclass{article}

\usepackage{microtype}
\usepackage{graphicx}
\usepackage{makecell}
\usepackage{xcolor}
\usepackage{subcaption}
\usepackage{booktabs} 
\usepackage{adjustbox}
\usepackage{pifont}
\usepackage{mdframed}

\usepackage{enumitem}
\setlist[itemize]{noitemsep, topsep=0pt, parsep=0pt, partopsep=0pt}
\setlist[enumerate]{nolistsep}
\definecolor{mplblue}{HTML}{1F77B4}
\definecolor{mplorange}{HTML}{FF7F0E}
\definecolor{mplgreen}{HTML}{2CA02C}
\definecolor{mplred}{HTML}{D62728}

\newcommand{\legendbox}[1]{%
  \textcolor{#1}{\rule{0.9em}{0.9em}}%
}
\newcommand{\fedqueue}{\textsc{FedQueue}}

\usepackage{hyperref}

\usepackage[accepted]{icml2026}

\usepackage{amsmath}
\usepackage{amssymb}
\usepackage{mathtools}
\usepackage{amsthm}

\usepackage[capitalize,noabbrev]{cleveref}

\theoremstyle{plain}
\newtheorem{theorem}{Theorem}[section]

\newtheorem{lemma}[theorem]{Lemma}

\theoremstyle{definition}

\newtheorem{assumption}[theorem]{Assumption}
\theoremstyle{remark}
\newtheorem{remark}[theorem]{Remark}

\usepackage[textsize=tiny]{todonotes}

\icmltitlerunning{FedQueue: Queue-Aware Federated Learning}

\begin{document}

\twocolumn[
  \icmltitle{FedQueue: Queue-Aware Federated Learning for Cross-Facility HPC Training}



  \icmlsetsymbol{equal}{*}

  \begin{icmlauthorlist}
    \icmlauthor{Yijiang Li}{mcs_anl}
    \icmlauthor{Emon Dey}{mcs_anl}
    \icmlauthor{Zilinghan Li}{dsl_anl}
    \icmlauthor{Krishnan Raghavan}{mcs_anl}
    \icmlauthor{Ravi Madduri}{dsl_anl}
    \icmlauthor{Kibaek Kim}{mcs_anl}
  \end{icmlauthorlist}

  \icmlaffiliation{mcs_anl}{Mathematics and Computer Science Division, Argonne National Laboratory, Lemont, IL, USA}
  \icmlaffiliation{dsl_anl}{Data Science and Learning Division, Argonne National Laboratory, Lemont, IL, USA}

  \icmlcorrespondingauthor{Kibaek Kim}{kimk@anl.gov}

  \icmlkeywords{Federated Learning, High-Performance Computing, Queue-Aware Scheduler, Staleness-Aware Aggregation, Asynchronous Training }

  \vskip 0.3in
]



\printAffiliationsAndNotice{}  

\begin{abstract}
Federated learning~(FL) across multiple HPC facilities faces stochastic \emph{admission delays} from batch schedulers that dominate wall-clock time. Synchronous FL suffers from severe stragglers, while asynchronous FL accumulates stale updates when queues spike. We propose \fedqueue{}, a queue-aware FL protocol that incorporates scheduler delays directly into training and aggregation, which (i) predicts per-facility queue delays online to budget local work, (ii) applies cutoff-based admission that buffers late arrivals to bound staleness, and (iii) performs staleness-aware aggregation to stabilize heterogeneous local workloads. We prove the convergence for non-convex objectives at rate $\mathcal{O}(1/\sqrt{R})$ under bounded staleness, and show that the admission controls yield bounded staleness with high probability under queue-prediction error. Real-world cross-facility deployment of \fedqueue{} shows 20.5\% improvement over baseline algorithms. Controlled queue simulations demonstrate robust improvement over the baselines; in particular, up to 60\% reduction in time to reach a target accuracy level under high queue variance and non-IID partitions.

\end{abstract}

\vspace{-25pt}

\section{Introduction}
\vspace{-5pt}

Federated learning (FL) enables collaborative model training across data silos without centralizing raw data~\citep{mcmahan2017communication,kairouz2021advances}. An important application is \emph{cross-facility} FL for scientific and industrial workloads~\citep{zhang2025incentive,zhuang2023foundation,kim2024privacy, li2026scalablecrossfacilityfederatedlearning}, where each client is a high-performance computing (HPC) facility operating under a batch scheduler (e.g., Slurm, PBS). Such settings are increasingly important for training large scientific foundation models, where no single facility holds sufficient data or compute and institutional or regulatory constraints preclude centralizing raw data. The dominant challenge in this application is \emph{admission delay}: jobs may wait minutes to hours in scheduler queues before starting, with wait times varying unpredictably due to system load and priority policies~\citep{feitelson2015workload}.

Queue-induced delays fundamentally change FL dynamics. Synchronous protocols like FedAvg~\citep{mcmahan2017communication} suffer severe straggler effects as the slowest queued facility dictates progress. Fully asynchronous protocols~\citep{xie2019asynchronous,xu2024fedfa} avoid blocking but accumulate heavy-tailed \emph{staleness} when delays are stochastic, aggregating updates from outdated models. Existing system-aware methods~\citep{li2024fedcompass} profile compute throughput but assume static availability patterns and do not model or control scheduler-driven queue dynamics.

We propose \fedqueue{}, a queue-aware FL protocol that \emph{predicts and controls scheduler admission delays} to achieve robust wall-clock progress with \emph{provably bounded staleness}. Our focus is on systems heterogeneity arising from scheduler-induced admission delay. \fedqueue{} combines: (i) \emph{online queue prediction} to budget per-facility job time and local steps, (ii) \emph{cutoff-based admission} that avoids blocking while bounding staleness, and (iii) \emph{staleness-aware aggregation} with inverse learning rate scaling. We prove non-convex convergence under provably bounded staleness for \fedqueue{}. These guarantees ensure that \fedqueue{} converges reliably despite queue variability, a key requirement for production HPC deployments where queue dynamics are unpredictable and time-varying. 

The key contributions of this work include: \textbf{(1) Queue-aware FL algorithm.} We formalize cross-facility FL under stochastic scheduler delays as a first-class signal, and introduce \fedqueue{}, combining online prediction, adaptive budgeting, admission control, and staleness-aware aggregation into a principled framework. \textbf{(2) Convergence theory with staleness control.} We prove convergence at rate $\mathcal{O}(1/\sqrt{R})$ with bounded staleness, which our admission policy achieves with high probability under sub-Gaussian prediction errors. \textbf{(3) Empirical validation.} Real-world deployment shows 20.5\% improved loss over the baselines while experiments with controlled synthetic queues show up to 60\% faster convergence and validate our theory.

\section{Related Work}

\subsection{FL Under Asynchrony and Staleness}

FedAvg~\citep{mcmahan2017communication} and many follow-ups~\citep{li2020federated,karimireddy2020scaffold} assume synchronous rounds, which are sensitive to stragglers. Asynchronous and buffered variants reduce blocking by aggregating updates as they arrive~\citep{xie2019asynchronous,xu2024fedfa,nguyen2022federated}. Their analyses typically require bounded staleness (or impose staleness-dependent weighting) to control instability in non-convex settings~\citep{nguyen2022federated,fraboni2023general,forootani2025asynchronous}. A complementary line studies semi-/deadline-asynchronous protocols that explicitly control delay via deadlines or stale-update filtering~\citep{yu2023asfl,yu2024staleness,liu2024fedasmu,damaskinos2022fleet}. More recently, FADAS~\citep{wang2024fadas} integrates asynchrony into adaptive federated optimization and proposes a delay-adaptive learning-rate strategy that down-scales the global step size when client delays exceed a threshold. It operates as a reactive server-side optimizer adjustment. \fedqueue{} targets a distinct delay mechanism—\emph{scheduler-induced admission delay}—and enforces bounded staleness via queue prediction and admission control rather than assuming it a priori or reacting to delay after the fact.

\subsection{System Heterogeneity and Client Availability}

Beyond asynchrony, FL must cope with heterogeneous data and systems. Methods such as FedProx and SCAFFOLD stabilize training under non-IID data~\citep{li2020federated,karimireddy2020scaffold}. Other approaches adapt client selection or resource allocation under heterogeneous compute and participation~\citep{ribero2022federated,wang2022unified,garg2025client}. FedCompass~\citep{li2024fedcompass} is closely related: it profiles client throughput and allocates local steps to equalize progress. However, throughput profiling alone does not capture \emph{time-varying admission delay} that can dominate wall-clock time in batch-scheduled HPC. \fedqueue{} complements profiling-based approaches by treating admission delay as a critical time-varying signal and shaping the realized staleness distribution through admission control.

\vspace{-6pt}
\subsection{HPC Scheduling and Queue-Time Prediction}

Batch schedulers such as Slurm, PBS, and Maui govern most HPC systems~\citep{yoo2003slurm,henderson1995job,jackson2001maui}, and extensive empirical work characterizes their workload dynamics and waiting-time variability~\citep{feitelson2015workload}. Queue wait-time prediction has been studied to support workflow planning and scheduling decisions~\citep{brown2022predicting,brown2024predicting,jancauskas2019predicting}. 
We use a lightweight and easy to deploy predictor, exponential weighted moving average (EWMA) across sites, but our analysis is prediction-agnostic.

\vspace{-6pt}
\subsection{Cross-Facility Training on HPC}

Cross-facility FL frameworks (e.g., xFFL) demonstrate end-to-end orchestration across supercomputers using workflow systems and remote job submission~\citep{casella2025performance,colonnelli24xffl,li2020federated}. These efforts surface practical issues such as queue variability and robustness, but focus on orchestration rather than algorithm design with queue-aware control and convergence guarantees. \fedqueue{} fills this gap by providing a queue-aware FL protocol with theory and time-to-quality evaluation under real scheduler dynamics.
\vspace{-6pt}

\subsection{Positioning}
Most asynchronous/buffered FL methods treat delay and staleness as exogenous and analyze convergence under an assumed staleness bound~\citep{xie2019asynchronous,nguyen2022federated}. Profiling-based system-aware methods, in contrast, focus on stable throughput heterogeneity~\citep{li2024fedcompass}. \fedqueue{} explicitly models scheduler-driven admission delay, predicting it online, and using cutoff-based admission to \emph{induce} bounded staleness with high probability, enabling principled convergence guarantees. 

\section{Problem Formulation}

\subsection{Problem Setup}
\label{sec:problem-setup}

We consider FL over $K$ clients (HPC facilities), indexed by $k\in\{1,\dots,K\}$. Client $k$ holds a local dataset $\mathcal{D}_k$ and minimizes a local empirical risk $F_k(w)$.
The global objective is
$
    \min_{w\in\mathbb{R}^p}\; F(w) := \sum_{k=1}^K p_k F_k(w)
$,
where $w \in \mathbb{R}^p$ is the model parameter vector, $p_k\ge 0$ are client weights with $\sum_k p_k=1$ (typically $p_k\propto |\mathcal{D}_k|$). We are motivated by the \emph{cross-facility} regime where each client is a batch-scheduled HPC system and the dominant source of systems heterogeneity is \emph{scheduler-induced admission delay}~\citep{feitelson2015workload}. Importantly, scheduler-induced admission delay does not modify the learning objective $F(w)$; rather, it changes when each client can compute and return an update. As a result, the server may apply an update computed from a stale global model rather than the current iterate. This timing-induced staleness is the primary mechanism by which queue delays affect convergence and wall-clock time-to-quality.

Unlike static throughput heterogeneity, queue delays in HPC systems are stochastic and time-varying due to batch scheduling policies and changing system load~\citep{feitelson2015workload}. Critically, these delays are \emph{not known in advance} and thus the server must predict them to effectively coordinate training. This uncertainty necessitates an adaptive approach that combines queue prediction, dynamic work budgeting, and staleness-aware aggregation.

\subsection{Round Timeline and Arrival Times}
\label{sec:timeline}

We specialize the standard asynchronous-FL timeline notation to batch-scheduled HPC settings. Throughout, superscripts $r$ index server rounds (out of total $R$ rounds). To compare methods under stochastic scheduler delays, we parameterize training by a server wall-clock schedule where the server operates on a fixed timer, advancing a logical round counter $r$ every $T_\text{sync}$ seconds (a user-chosen \emph{synchronization horizon}), independent of which updates have arrived. This fixed cadence defines deadlines for update admission.

At the beginning of round $r$ (at time $rT_{\text{sync}}$), the server broadcasts the current global model $w^{(r)}$ and triggers job submission at each facility. Client $k$ experiences a queue (admission) delay $q_k^{(r)}\ge 0$ before the job starts, then trains for local compute time $h_k^{(r)}\ge 0$, and returns an update at arrival time
$
    a_k^{(r)} := rT_{\text{sync}} + q_k^{(r)} + h_k^{(r)}.
$
Because $q_k^{(r)}$ is stochastic and unknown at round start, updates may arrive after the server has already advanced to subsequent rounds. 

\subsection{Staleness and Update Buffering}
\label{sec:staleness}

Since jobs may complete after the server has advanced the global model, updates can be computed on stale parameters. Let $R_k^{(r)}\le r$ denote the index of the global model that client $k$ used to compute an update that is \emph{aggregated} at server round $r$, the staleness is
$
    \tau_k^{(r)} := r - R_k^{(r)} \ge 0.
$ For example, if client $k$ receives $w^{(s)}$ at the start of round $s$ but its update arrives after round $s$ has ended, that update will be buffered and aggregated at a later round $r > s$ with staleness $\tau_k^{(r)} = r - s$.

Excessive staleness degrades convergence because the update direction computed from $w^{(R_k^{(r)})}$ may no longer align well with the current model $w^{(r)}$. To control staleness while maintaining high system utilization, we need mechanisms to predict queue delays and adaptively budget local work.

\section{Queue-Aware FL Algorithm}
\label{sec:algorithm}

\begin{figure*}[!ht]
\centerline{\includegraphics[width=0.9\linewidth]{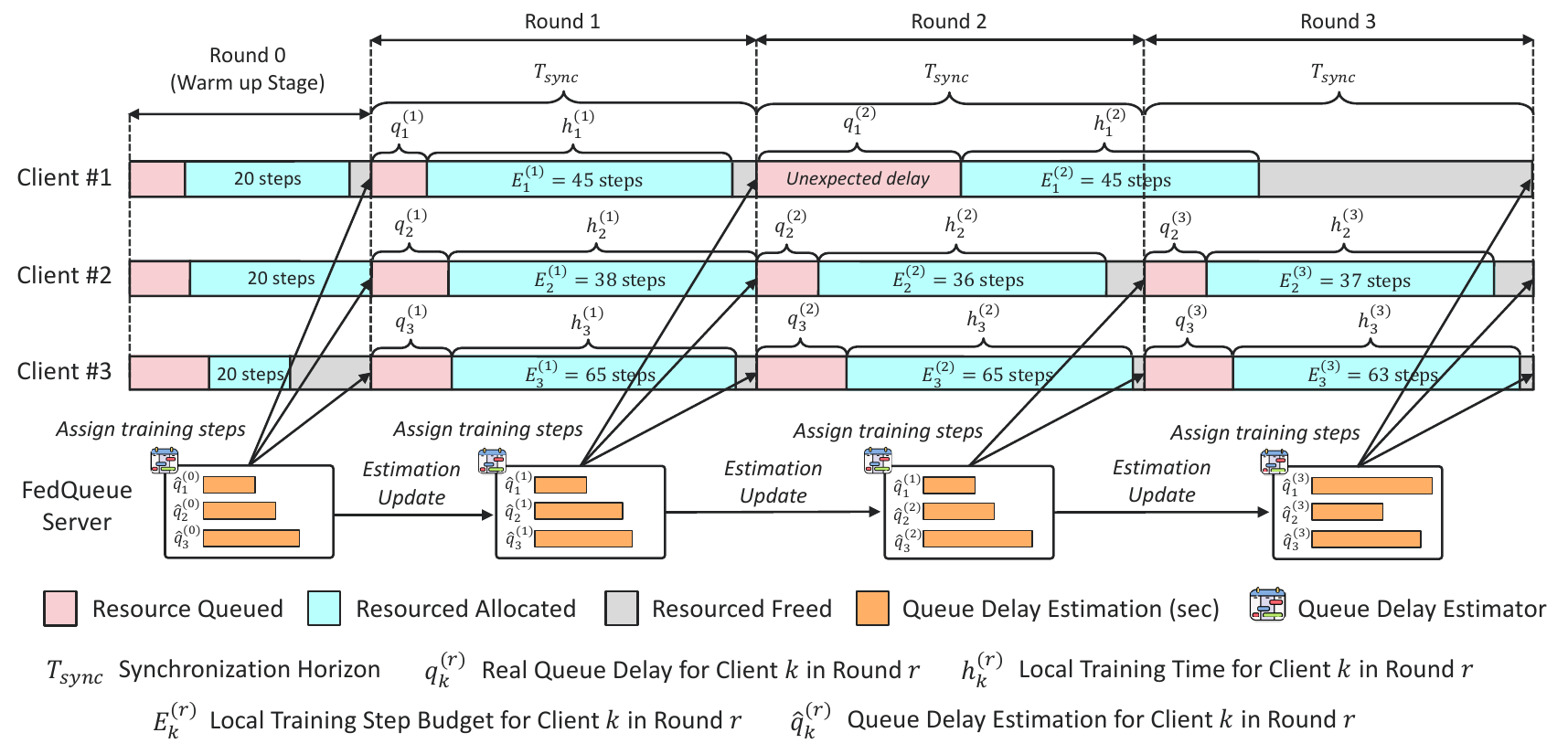}}
\vspace{-5pt}
\caption{\textbf{Illustration of the \fedqueue{} Algorithm.} The \fedqueue{} server first obtains initial queuing delay estimates for each client ($\hat{q}_k^{(0)}$) during a warm-up stage, and accordingly assigns the number of local training steps ($E_k^{(r)}$). In each round $r$, the \fedqueue{} server updates the estimates based on the recent queuing delay ($q_k^{(r)}$) and performs global aggregation using all client updates received before the admission deadline ($rT_{\text{sync}}$). If a certain client fails to return its update within the deadline due to an unexpected delay (e.g., Client 1 in Round 2), its update will be deferred and incorporated in the next aggregation round, with the staleness factor applied.}
\label{fig:fedqueue-overview}
\vspace{-0.1in}
\end{figure*}

In this section, we present our Queue-Aware FL (\fedqueue{}) algorithm, which addresses the challenges of training across multiple HPC facilities with unpredictable queue delays by combining three key components: \textbf{(1) Queue prediction and adaptive budgeting:} Maintain online predictions of queue delays and dynamically allocate per-client compute budgets to target consistent wall-clock progress. \textbf{(2) Deadline-based admission control:} Use cutoff times to bound update staleness while buffering late arrivals for incorporation in subsequent rounds. \textbf{(3) Staleness-aware aggregation:} Down-weight stale updates and scale learning rates inversely with local step counts to stabilize optimization.

\subsection{Server Algorithm Components}
\label{sec:server-components}
Algorithm~\ref{alg:server} presents the server-side protocol for \fedqueue{}. At the start of each round $r$, the server maintains queue delay predictions $\hat{q}_k^{(r)}$ for all clients, computes per-client job-time budgets $J_k^{(r)}$ and local step budgets $E_k^{(r)}$ based on the profiled throughput $c_k$ and predictor state $\hat{q}_k^{(r)}$, determines adaptive learning rates $\eta_k^{(r)}$, and broadcasts $(w^{(r)}, r, J_k^{(r)}, E_k^{(r)}, \eta_k^{(r)})$ to all clients. The server then collects arriving updates until a deadline, aggregates all available updates with staleness-aware weighting, and buffers any late arrivals for future rounds.

\vspace{-0.1in}
\begin{algorithm}[!ht]
\caption{\fedqueue{} Server}
\label{alg:server}
\begin{algorithmic}[1]
\INPUT Initial model $w^{(0)}$, synchronization horizon $T_\text{sync}$, safety buffer $\delta$, EWMA rate $\alpha$, base learning rate $\eta_\text{base}$, staleness decay $\phi(\cdot)$, facility weights $\{p_k\}_{k=1}^K$, throughputs $\{c_k\}_{k=1}^K$, maximum rounds $R_\text{max}$.
\OUTPUT Global models $\{w^{(r)}\}_{r=0}^{R_\text{max}}$.
\STATE Initialize queue predictions $\hat q_k$ for all $k$; buffer $\mathcal{B}\gets\emptyset$; $\mathcal{A}^{(-1)}\gets\{1,\ldots,K\}$
\FOR{$r=0,1,\dots,R_\text{max}-1$}
    \FOR{$k \in \mathcal{A}^{(r-1)}$}
        \STATE $J_k^{(r)} \gets T_\text{sync} - \hat q_k - \delta; \quad E_k^{(r)} \gets \lfloor c_k\, J_k^{(r)} \rfloor$
    \ENDFOR
    \STATE $E_{\min}^{(r)} \gets \min_{k} E_k^{(r)}$
    \FOR{$k=1,\dots,K$}
        \STATE $\eta_k^{(r)} \gets \eta_\text{base}\, E_{\min}^{(r)}/E_k^{(r)}$ 
        \STATE Broadcast $(w^{(r)}, r, J_k^{(r)}, E_k^{(r)}, \eta_k^{(r)})$ to facility $k$
    \ENDFOR
    \STATE Set cutoff time $t_\text{cut}^{(r)} \gets (r+1)\,T_\text{sync}$
    \STATE Receive messages $(k,s,\Delta,q)$ at time $t$:
    \STATE \hspace{1em} $\mathcal{B}\gets \mathcal{B}\cup\{(k,s,\Delta,t)\}$,\;\; $\hat q_k \gets (1-\alpha)\hat q_k + \alpha q$
    \STATE $\mathcal{A}^{(r)} \gets \{(k,s,\Delta): (k,s,\Delta,t)\in\mathcal{B},\; t\le t_\text{cut}^{(r)}\}$ 
    \STATE $\mathcal{B}\gets \mathcal{B}\setminus\{(k,s,\Delta,t)\in\mathcal{B}: t\le t_\text{cut}^{(r)}\}$
    \STATE $S^{(r)} \gets \sum_{(k,s,\Delta)\in\mathcal{A}^{(r)}} p_k\,\phi(r-s)$ 
    \STATE $w^{(r+1)} \gets w^{(r)} + \frac{1}{S^{(r)}}\sum_{(k,s,\Delta)\in\mathcal{A}^{(r)}} p_k\,\phi(r-s)\,\Delta$
\ENDFOR
\end{algorithmic}
\end{algorithm}
\vspace{-0.1in}

\subsubsection{Queue Delay Prediction}
\label{sec:prediction}

Queue delays $q_k^{(r)}$ are not known when planning round $r$, the server must maintain an online predictor $\hat{q}_k^{(r)}$. When client $k$'s update arrives, the server observes the realized queue delay $q_k^{(r)}$ and updates its predictor for future rounds. We employ EWMA, a lightweight baseline commonly used in queue-time prediction~\citep{jancauskas2019predicting,brown2022predicting,brown2024predicting}:
$\hat{q}_k^{(r+1)} = (1-\alpha)\hat{q}_k^{(r)} + \alpha q_k^{(r)}$,
where $\alpha\in(0,1)$ is the EWMA rate controlling the trade-off between responsiveness to recent observations and noise stability.

Our convergence analysis does not assume a particular predictor form (prediction-agnostic framework): it only requires that prediction errors satisfy a sub-Gaussian concentration property (see Section~\ref{sec:theory}). This allows practitioners to substitute more sophisticated predictors directly as needed for their specific HPC environments.

\subsubsection{Adaptive Work Budgeting}
\label{sec:budgeting}

Given the queue delay prediction $\hat{q}_k^{(r)}$, the server allocates a per-round job-time budget $J_k^{(r)}$ to each client to target completion within the synchronization horizon $T_{\text{sync}}$ (Line 4 in Algorithm~\ref{alg:server}):
$
    J_k^{(r)} := T_{\text{sync}} - \hat{q}_k^{(r)} - \delta
$,
where $\delta>0$ is a safety buffer that accounts for prediction error and system overheads (e.g., data transfer, job initialization). The corresponding local training step budget is
$
    E_k^{(r)} := \left\lfloor c_k\, J_k^{(r)}\right\rfloor
$,
where $c_k>0$ is the profiled training throughput of client $k$ in local SGD steps per second. We therefore enforce that local training time $h_k^{(r)}$ respects the job-time budget for all clients (i.e., $h_k^{(r)} \le J_k^{(r)} \text{for all }k,r$).

This adaptive budgeting mechanism enables \fedqueue{} to balance heterogeneous queue conditions: clients experiencing longer predicted queues receive smaller compute budgets to increase their likelihood of meeting the round deadline, while clients with shorter predicted queues can perform more local work.

\subsubsection{Adaptive Learning Rate Scaling}
\label{sec:learning-rate}

The adaptive budgeting mechanism produces heterogeneous local step counts $E_k^{(r)}$ across facilities due to varying queue predictions and compute speeds. Without compensation, clients performing more local steps would dominate the aggregation and cause excessive drift from the global model.

To stabilize training under heterogeneous local step counts $E_k^{(r)}$, we adopt an inverse learning rate scaling strategy inspired by FedCompass \cite{li2024fedcompass}: $\eta_k^{(r)} = \eta_\text{base}E_\text{min}^{(r)}/E_k^{(r)}$ (Line 8 in Algorithm~\ref{alg:server}). This scaling equalizes the effective local displacement across heterogeneous step budgets. Consider the displacement of client $k$'s parameters after $E_k^{(r)}$ local SGD steps, to first order, the cumulative displacement is approximately $\eta_k^{(r)} \cdot E_k^{(r)} \cdot \|\nabla F_k(w^{(r)})\|$. Without compensation (i.e., $\eta_k^{(r)} = \eta_{\text{base}}$), this quantity scales with $E_k^{(r)}$, so clients with larger budgets accumulate disproportionately large updates and dominate the aggregation. Substituting $\eta_k^{(r)} = \eta_{\text{base}} E_{\min}^{(r)}/E_k^{(r)}$ cancels the $E_k^{(r)}$ factor, yielding an effective displacement of approximately $\eta_{\text{base}} \cdot E_{\min}^{(r)} \cdot \|\nabla F_k(w^{(r)})\|$---independent of each client's local step budget. This relationship is used explicitly in our analysis (Section~\ref{sec:theory}) to bound the accumulated local drift.

\subsubsection{Deadline-Based Admission Control}
\label{sec:admission}

To bound staleness while maintaining high utilization, \fedqueue{} uses a deadline-based admission policy with buffering (Lines 11-17 in Algorithm~\ref{alg:server}). At the end of each round $r$, the server sets a cutoff time $t_{\text{cut}}^{(r)} := (r+1)T_{\text{sync}}$ and aggregates all updates that have arrived by this deadline. For a job submitted at the start of round $s$ (time $sT_{\text{sync}}$), the update is included in the aggregation for round $s$ if and only if $a_k^{(s)} \le (s+1)T_{\text{sync}}$.
Updates arriving after the cutoff are {not discarded}; instead, they are buffered (Line 15 adds updates to $\mathcal{B}$) and incorporated in the first subsequent round whose cutoff they meet. Specifically, an update produced from $w^{(s)}$ and arriving at time $a_k^{(s)}$ is aggregated at next available round, i.e., round $r$ such that $r = \min\{j\ge s : a_k^{(s)} \le (j+1)T_{\text{sync}}\}$

with staleness $\tau_k^{(r)} = r-s$. This buffering mechanism ensures that no computational work is wasted while still maintaining bounded staleness through the cutoff discipline.

\subsubsection{Staleness-Aware Aggregation}
\label{sec:aggregation}

To mitigate the impact of stale updates while retaining the benefits of asynchrony, we down-weight delayed updates using a staleness decay function $\phi:\mathbb{N}\to\mathbb{R}_+$ with $\phi(0)=1$ (Lines 16-17 in Algorithm~\ref{alg:server}). We use harmonic decay by default (with a parameter $\beta$),
$\phi(\tau)=(1+\beta\tau)^{-1}$. A sensitivity study over values of $\beta$ and a comparison with exponential decay are provided in Appendix~\ref{app:sensitivity_decay}.

The global model is updated using the staleness-weighted aggregation:

{\small
\vspace{-0.3cm}
$$
    w^{(r+1)} = w^{(r)} + \frac{1}{S^{(r)}} \sum_{k\in\mathcal{A}^{(r)}} p_k \phi(\tau_k^{(r)}) \Delta_k^{(r)},
$$
\vspace{-0.5cm}
}

where $\mathcal{A}^{(r)}$ denotes the set of updates meeting the cutoff deadline for round $r$ (Line~14), $\Delta_k^{(r)}$ denotes the update from the client, and $S^{(r)}:=\sum_{k\in\mathcal{A}^{(r)}} p_k\phi(\tau_k^{(r)})$ is a normalization factor ensuring proper scaling (Line~16).

\subsubsection{Practical Parameter Selection}
\fedqueue{}'s parameters can be guided by HPC system characteristics rather than fine-tuned. The synchronization horizon $T_\text{sync}$ plays the role of a round duration and can be set from pre-deployment scaling studies that profile nominal throughput $c_k$ and typical queue waits; the warm-up stage then initializes $\hat q_k$ so adaptive budgeting starts on realistic values. The safety buffer $\delta$ need not be precisely tuned, but should be set conservatively to keep most clients within $T_\text{sync}$, trading a small amount of local progress for tighter staleness concentration, as characterized in Lemma~\ref{lem:bounded-staleness-queue} and Figure~\ref{fig:C4_delta_cdf}. For the EWMA rate $\alpha$ a moderate value of $0.5$ keeps client arrivals within $T_\text{sync}$ in our sweep, while both slower ($\alpha = 0.1$) and more aggressive ($\alpha = 1.0$) tracking induce late arrivals as shown in Figure~\ref{fig:C2_alpha_staleness} and Appendix~\ref{app:controlled-sweeps}. The staleness-decay parameter $\beta$ should likewise be chosen conservatively; our sensitivity study in Appendix~\ref{app:sensitivity_decay} shows performance is stable across $\beta\in\{0.25,0.5,1.0\}$ for both harmonic and exponential decay, with harmonic $\beta=0.5$ as the recommended default. Finally, because scheduled maintenance windows are typically announced in advance, practitioners can avoid initiating training during anticipated disruptions.

\subsection{Client Algorithm Components}
\label{sec:client-components}

\begin{algorithm}[!htpb]
\caption{\fedqueue{} Client (Facility $k$)}
\label{alg:client}
\begin{algorithmic}[1]
\INPUT $(w^{(r)}, r, J_k^{(r)}, E_k^{(r)}, \eta_k^{(r)})$ from server.
\OUTPUT Message $(k,r,\Delta_k^{(r)}, q_k^{(r)})$.
\STATE $t_\text{sub}\gets$ current time; submit a job for $J_k^{(r)}$ seconds.
\STATE Set $t_\text{start}\gets$ start time and $q_k^{(r)}\gets t_\text{start}-t_\text{sub}$.
\STATE Initialize $w_k \gets w^{(r)}$; run local SGD for at most $E_k^{(r)}$ steps (or until time budget expires) with step size $\eta_k^{(r)}$.
\STATE $\Delta_k^{(r)} \gets w_k - w^{(r)}$.
\STATE Send $(k,r,\Delta_k^{(r)},q_k^{(r)})$ to server.
\end{algorithmic}
\end{algorithm}

Algorithm~\ref{alg:client} presents the client-side protocol for \fedqueue{}. Once each client $k$ receives the broadcast from the server, it submits a batch job to its local scheduler, observes the realized queue delay $q_k^{(r)}$, runs up to $E_k^{(r)}$ local SGD steps (or stops when the job-time budget expires), and returns the update $\Delta_k^{(r)}$ (Line 4) along with the round index $r$ and the observed queue delay to the server for determining staleness and updating queuing delay estimates. 

Figure~\ref{fig:fedqueue-overview} illustrates the lifecycle of an FL experiment under the \fedqueue{} algorithm. In practice, we implement an extra warm-up stage to obtain the initial queuing estimates.

\section{Theoretical Analysis}
\label{sec:theory}

This section establishes theoretical convergence guarantees for \fedqueue{} under standard non-convex assumptions. We prove that its convergence rates are comparable to existing asynchronous FL methods while explicitly accounting for queue delay prediction error and staleness variance.

To establish convergence guarantees, we make the following standard assumptions in non-convex federated optimization.

\begin{assumption}[L-smooth objective]
\label{assump:smooth}
The global objective $F$ is $L$-smooth, i.e., for all $w, w' \in \mathbb{R}^p$:
\small{
\[
\|\nabla F(w) - \nabla F(w')\| \leq L\|w - w'\|.
\]
}
\end{assumption}

\begin{assumption}[Bounded Gradient Variance]
\label{assump:variance}
The stochastic gradient has bounded variance. For each facility $k$ and any $w \in \mathbb{R}^p$:

{\small\vspace{-0.3cm}
\[
\mathbb{E}_{\xi \sim \mathcal{D}_k}\left[\|\nabla_\xi \ell(w; \xi) - \nabla F_k(w)\|^2\right] \leq \sigma_k^2,
\]
}
where $\sigma_k^2$ is the local gradient variance at facility $k$.
\end{assumption}

\begin{assumption}[Bounded Dissimilarity]
\label{assump:dissimilarity}
The local objectives exhibit bounded heterogeneity. There exists $G \geq 0$ such that for all $k \in [K]$:

{\small\vspace{-0.3cm}
\[
\|\nabla F_k(w) - \nabla F(w)\|^2 \leq G^2.
\]
}
\end{assumption}
Assumptions \ref{assump:smooth} and \ref{assump:variance} are standard in non-convex federated optimization and hold for typical neural network training with smooth losses such as cross-entropy or mean squared error. 

Assumption~\ref{assump:dissimilarity} is a common tool in the analysis of heterogeneous FL and appears in SCAFFOLD~\cite{karimireddy2020scaffold} and related works on non-IID convergence. The constant $G$ quantifies cross-client heterogeneity. As Theorem~\ref{thm:main} makes explicit later, $G$ appears in the bias term of the convergence bound. In particular, stronger heterogeneity does not break convergence but raises the asymptotic bias proportional to $G^2$.

\subsection{Main Convergence Result}

We first show that the admission window induces a bounded-staleness regime with high probability under mild sub-Gaussian queue-prediction errors and properly chosen safety buffer $\delta$. We emphasize that this concentration condition is imposed on the prediction error $e_k^{(r)} = q_k^{(r)} - \hat q_k^{(r)}$, not on the raw queue delay $q_k^{(r)}$, whose distribution may be heavy-tailed.

\begin{lemma}[Admission-Induced Bounded Staleness]
\label{lem:bounded-staleness-queue}
Fix $T_{\text{sync}}>0$ and $\delta>0$, and let $\gamma \ge 0$ be an analysis threshold.
Assume the local training time respects the job-time budget, i.e., $h_k^{(r)} \le J_k^{(r)} \text{for all }k,r$.

Let $e_k^{(r)} := q_k^{(r)} - \hat q_k^{(r)}$ denote the queue-delay prediction error (Section~\ref{sec:prediction}), and let $\mathcal{F}_{r-1}$ denote the filtration up to round $r-1$. Suppose that conditionally on the history up to round $r-1$, $e_k^{(r)}$ is zero-mean and $\rho_k$-sub-Gaussian:
\begin{equation}
\small
\label{eq:subgaussian-error}
\mathbb{E}\!\left[\exp\big(\lambda e_k^{(r)}\big) \mid \mathcal{F}_{r-1}\right]
\le \exp\!\left(\frac{\lambda^2 \rho_k^2}{2}\right)
\quad\text{for all }\lambda\in\mathbb{R}.
\end{equation}
Consider the buffering rule in Section~\ref{sec:staleness}.
If a job submitted at time $rT_{\text{sync}}$ completes by time $rT_{\text{sync}}+(1+\gamma)T_{\text{sync}}$, then it must be incorporated within at most $\lceil 1+\gamma \rceil$ subsequent server cutoffs, yielding a staleness bound.

Then, for any $\varepsilon \in (0,1)$, if $\delta$ is chosen such that
\begin{equation}
\small
\label{eq:delta-condition}
\gamma T_{\text{sync}} + \delta \;\ge\; \max_{k \in [K]} 
\sqrt{2 \rho_k^2 \log\left(\frac{K R}{\varepsilon}\right)},
\end{equation}
we have, with probability at least $1-\varepsilon$,
\begin{equation}
\small
\label{eq:bounded-staleness-result}
\tau_k^{(r)} \;\le\; \tau_{\max} := \left\lceil 1 + \gamma \right\rceil
\quad\text{for all } k,r.
\end{equation}
\end{lemma}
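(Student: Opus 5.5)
The argument has two parts: a deterministic step showing that a small prediction error forces a small staleness, and a probabilistic step showing that all prediction errors are small simultaneously with high probability.

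\textbf{Deterministic step.} Fix a job submitted at round $s$ by client $k$. By the assumption $h_k^{(s)} \le J_k^{(s)} = T_{\text{sync}} - \hat q_k^{(s)} - \delta$, its arrival time satisfies
\[
a_k^{(s)} = sT_{\text{sync}} + q_k^{(s)} + h_k^{(s)} \le sT_{\text{sync}} + T_{\text{sync}} + e_k^{(s)} - \delta .
\]
On the event $e_k^{(s)} \le \gamma T_{\text{sync}} + \delta$, this gives $a_k^{(s)} \le (s+1+\gamma)T_{\text{sync}}$. By the buffering rule of Section~\ref{sec:admission}, the update is aggregated at
\[
r = \min\{\, j \ge s : a_k^{(s)} \le (j+1)T_{\text{sync}} \,\}.
\]
The index $j = s + \lceil \gamma \rceil$ satisfies the cutoff condition, because $(s+\lceil\gamma\rceil+1)T_{\text{sync}} \ge (s+1+\gamma)T_{\text{sync}}$. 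Hence the staleness is $\tau = r - s \le \lceil \gamma \rceil \le \lceil 1+\gamma \rceil$. This proves the ``completes by $(1+\gamma)T_{\text{sync}}$'' clause; the stated bound $\tau_{\max}$ has one unit of slack, which I will use to tolerate off-by-one conventions in when the cutoff is counted.

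\textbf{Probabilistic step.} For each pair $(k,r)$, the conditional sub-Gaussian bound \eqref{eq:subgaussian-error} together with the Chernoff method gives
\[
\Pr\bigl(e_k^{(r)} > t \mid \mathcal{F}_{r-1}\bigr) \le \exp\bigl(-t^2/(2\rho_k^2)\bigr).
\]
Concretely, I bound $\mathbb{E}[e^{\lambda e}]e^{-\lambda t}$ and optimize at $\lambda = t/\rho_k^2$. Taking expectations removes the conditioning. Setting $t = \gamma T_{\text{sync}} + \delta$ and invoking \eqref{eq:delta-condition}, which gives $t^2 \ge 2\rho_k^2 \log(KR/\varepsilon)$, each tail probability is at most $\varepsilon/(KR)$. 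A union bound over the $KR$ pairs then shows that the event $\{e_k^{(r)} \le \gamma T_{\text{sync}}+\delta \ \forall k,r\}$ has probability at least $1-\varepsilon$. Only the upper tail is needed, since early arrivals never increase staleness.

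\textbf{Combining, and the expected obstacle.} On this good event, the deterministic step applies to every submitted job, which yields \eqref{eq:bounded-staleness-result}. The step I expect to need the most care is the bookkeeping of which updates exist at all. In Algorithm~\ref{alg:server}, client $k$ only receives a budget at round $r$ if $k\in\mathcal{A}^{(r-1)}$, so not every $(k,r)$ pair corresponds to a job, and $\tau_k^{(r)}$ is indexed by the aggregation round rather than the submission round. I will handle this by indexing over submitted jobs $(k,s)$, of which there are at most $KR$. The union bound therefore still covers them, and every aggregated update inherits its staleness from its submission round. Independence across $k$ is not needed, since the argument uses only the union bound.
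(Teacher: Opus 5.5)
Your proposal is correct and follows the paper's proof essentially step for step. The budget constraint gives $a_k^{(s)} \le (s+1)T_{\text{sync}} + e_k^{(s)} - \delta$, and the event $e_k^{(s)} \le \gamma T_{\text{sync}} + \delta$ then forces arrival by $(s+1+\gamma)T_{\text{sync}}$, hence bounded staleness under the buffering rule; the conditional sub-Gaussian tail plus a union bound over at most $KR$ jobs yields probability at least $1-\varepsilon$. Your explicit evaluation of the aggregation index $\min\{j \ge s : a_k^{(s)} \le (j+1)T_{\text{sync}}\}$ is slightly more careful than the paper (it gives the sharper $\tau \le \lceil\gamma\rceil$), and indexing the union bound by submitted jobs avoids the paper's looser bookkeeping, including its unnecessary $\varepsilon/2$ adjustment.
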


We report empirical prediction error statistics, staleness distributions and violation frequencies that validate the practical tightness of this bound in Section~\ref{sec:experiments}.

We present our main convergence theorem for \fedqueue{} under non-convex objectives with bounded staleness and queue prediction error.

\begin{theorem}[Convergence of \fedqueue{}]
\label{thm:main}
Under Assumptions \ref{assump:smooth}--\ref{assump:dissimilarity}, suppose the staleness is bounded by $\tau_k^{(r)} \leq \tau_{\max}$ for all $k \in [K]$ and $r \in [R]$. Let $\eta^{(r)}_k = \eta_{\text{base}} E_{\min}^{(r)}/E_k^{(r)}$ with $\eta_{\text{base}} \leq \frac{1}{8L E_{\max}}$, where $E_{\max} = \max_{k,r} E_k^{(r)}$. Assume the staleness decay function satisfies $\phi(\tau) \geq \phi_{\min} > 0$ for all $\tau \leq \tau_{\max}$. Then, for the iterates generated by \fedqueue{} (Algorithm 1), there exist constants $C_0,C_1>0$ such that 
\begin{align*}
\small
&\frac{1}{R} \sum_{r=0}^{R-1} \mathbb{E}\left[\|\nabla F(w^{(r)})\|^2\right] 
\leq C_0 \frac{F(w^{(0)}) - F^*}{\textcolor{purple}{\eta_{\text{base}}} \textcolor{blue}{E_{\min}} R} \\
&\qquad + C_1 \,\textcolor{purple}{\eta_{\text{base}}} \textcolor{blue}{E_{\max}} \big(L^2 \textcolor{red}{\tau_{\max}^2} + G^2 + \sigma^2\big).
\end{align*}
where $F^* = \inf_w F(w)$, $E_{\min} = \min_{k,r} E_k^{(r)}$, and $\sigma^2 = \max_k \sigma_k^2$.
\end{theorem}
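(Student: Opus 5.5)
The proof is a standard one-step descent argument for $F$ along the aggregated update, followed by telescoping. Write the local update of client $k$ started from $w^{(s)}$ with $s=R_k^{(r)}$ as
\[
\Delta_k^{(r)} = -\eta_k^{(s)}\sum_{j=0}^{E_k^{(s)}-1} g_k(w_{k,j}),
\]
where $g_k$ is a stochastic gradient and $w_{k,j}$ are the local iterates. Set normalized weights $\bar p_k^{(r)} := p_k\phi(\tau_k^{(r)})/S^{(r)}$, which sum to one over $\mathcal{A}^{(r)}$. By $L$-smoothness (Assumption~\ref{assump:smooth}),
\[
\mathbb{E}F(w^{(r+1)}) \le \mathbb{E}F(w^{(r)}) + \mathbb{E}\Big\langle \nabla F(w^{(r)}), \sum_k \bar p_k^{(r)}\Delta_k^{(r)}\Big\rangle + \tfrac{L}{2}\,\mathbb{E}\Big\|\sum_k \bar p_k^{(r)}\Delta_k^{(r)}\Big\|^2 .
\]
The key structural fact is the inverse scaling $\eta_k^{(s)}E_k^{(s)}=\eta_{\text{base}}E_{\min}^{(s)}$. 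It lets us write
\[
\sum_k \bar p_k^{(r)}\Delta_k^{(r)} = -\eta_{\text{base}}\sum_k \bar p_k^{(r)} E_{\min}^{(s)}\cdot\frac{1}{E_k^{(s)}}\sum_j g_k(w_{k,j}),
\]
which is $\eta_{\text{base}}E_{\min}^{(s)}$ times a convex combination of averaged stochastic gradients. The effective step is therefore between $\eta_{\text{base}}E_{\min}$ and $\eta_{\text{base}}E_{\max}$.

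For the inner product we use $\langle a,-b\rangle=-\tfrac12\|a\|^2-\tfrac12\|b\|^2+\tfrac12\|a-b\|^2$ with $a=\nabla F(w^{(r)})$ and $b$ the weighted average of true local gradients at the local iterates. Conditional unbiasedness removes the noise in the linear term. This produces a descent term $-\tfrac12\eta_{\text{base}}E_{\min}\|\nabla F(w^{(r)})\|^2$ plus $\eta_{\text{base}}E_{\max}$ times the error $\|\nabla F(w^{(r)})-\sum_k\bar p_k \tfrac1{E_k}\sum_j\nabla F_k(w_{k,j})\|^2$. We split this error into three pieces:
\begin{itemize}
\item \textbf{Heterogeneity.} $\nabla F_k$ versus $\nabla F$ at the same point, bounded by $G^2$ via Assumption~\ref{assump:dissimilarity}. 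Note that the weights $\bar p_k$ differ from $p_k$, which is why we use the per-client bound rather than cancellation.
\item \textbf{Staleness drift.} $\nabla F(w^{(r)})-\nabla F(w^{(s)})$, bounded by $L^2\|w^{(r)}-w^{(s)}\|^2$.
\item \textbf{Local drift.} $\nabla F_k(w^{(s)})-\nabla F_k(w_{k,j})$. Strictly this needs smoothness of each $F_k$; I would assume each $F_k$ is $L$-smooth, as is implicitly intended.
\end{itemize}

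For the local drift, the standard unrolling with $\eta_k E_k\le \eta_{\text{base}}E_{\max}\le 1/(8L)$ gives
\[
\mathbb{E}\|w_{k,j}-w^{(s)}\|^2 \lesssim (\eta_{\text{base}}E_{\max})^2\big(\sigma^2+G^2+\|\nabla F(w^{(s)})\|^2\big).
\]
The step-size condition is what closes the recursion (a $(1+1/E)^{E}\le e$ argument). For the staleness drift,
\[
\|w^{(r)}-w^{(s)}\|^2\le \tau_{\max}\sum_{i=s}^{r-1}\|w^{(i+1)}-w^{(i)}\|^2,
\]
and each global increment is bounded in second moment by $(\eta_{\text{base}}E_{\max})^2$ times $(\sigma^2+G^2+\|\nabla F\|^2$ at a stale point). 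The quadratic term $\tfrac L2\|\cdot\|^2$ is bounded by the same quantity.

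The main obstacle is that these drift bounds reintroduce gradient norms $\|\nabla F(w^{(i)})\|^2$ at past rounds $i\in[r-\tau_{\max},r]$. These must be absorbed into the descent terms after summing over $r$. Each round's gradient appears at most $\tau_{\max}$ times, with coefficient $\lesssim L^2\eta_{\text{base}}^3E_{\max}^3\tau_{\max}^2$. To close, I would either require the step size small enough that this is below $\tfrac14\eta_{\text{base}}E_{\min}$, folding the requirement into $C_0,C_1$, or, following the theorem's form, not absorb it and instead bound the residual coarsely, putting $L^2\tau_{\max}^2$ into the bias with an $\eta_{\text{base}}E_{\max}$ factor. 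I would try the absorption route first and fall back on the coarser bound.

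Telescoping over $r=0,\dots,R-1$, using $F(w^{(R)})\ge F^*$, and dividing by $\tfrac14\eta_{\text{base}}E_{\min}R$ yields the stated bound. The ratio $E_{\max}/E_{\min}$ and the bound $\phi_{\min}$ (used only to keep $S^{(r)}>0$ and the weights well defined) are absorbed into $C_0,C_1$.
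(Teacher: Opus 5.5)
\paragraph{The gap: the heterogeneity term.} Your last sentence does not follow from your decomposition. You bound the heterogeneity piece per client by $G^2$, and it enters with the first-order coefficient $\eta_{\text{base}}E_{\max}$. Conditional unbiasedness removes $\sigma$ from the linear term, but nothing removes $G$. After telescoping and dividing by $\tfrac14\eta_{\text{base}}E_{\min}R$, you are left with a term of order $(E_{\max}/E_{\min})\,G^2$ with no factor of $\eta_{\text{base}}$. That term cannot be absorbed into $C_1\eta_{\text{base}}E_{\max}G^2$ with $C_0,C_1$ independent of $\eta_{\text{base}}$ and $R$, which the $\mathcal{O}(1/\sqrt{R})$ reading of the theorem requires. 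So your argument proves the displayed bound plus an $\eta_{\text{base}}$-independent heterogeneity bias (under an extra step-size restriction; see below), not the statement.

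\paragraph{Why this cannot be repaired.} The $G^2$ scaling in the statement is false, for precisely the reason you flagged: with $\bar p_k\neq p_k$ the method descends on a reweighted objective. Take:
\begin{itemize}
\item $K=2$, $p_1=p_2=\tfrac12$, and $F_{1,2}(w)=\tfrac12(w\mp 1)^2$, so $L=G=1$ and $\sigma=0$;
\item $E_k^{(r)}\equiv 1$ and harmonic decay with $\beta=1$;
\item client~1 always aggregated with staleness $1$, client~2 always with staleness $0$.
\end{itemize}
For $r\ge 1$ the iteration is $w^{(r+1)}=w^{(r)}-\tfrac{\eta}{3}(w^{(r-1)}-1)-\tfrac{2\eta}{3}(w^{(r)}+1)$ with $\eta=\eta_{\text{base}}$. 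For every $\eta\in(0,\tfrac18]$ it converges to $w^\ast=-\tfrac13$, the stationary point of $\tfrac13F_1+\tfrac23F_2$. Start at the minimizer $w^{(0)}=0$ of $F$. The left-hand side then tends to $\|\nabla F(w^\ast)\|^2=\tfrac19$, while the right-hand side equals $2C_1\eta_{\text{base}}$ for every $R$.

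The paper reaches the stated form through an unjustified step. In the proof of Lemma~\ref{lem:variance}, its own computation ends with a term $+\eta_k^{(r)}E_k^{(r)}G^2$, which is then asserted to be of the form $c_2(\eta_k^{(r)})^2(E_k^{(r)})^2G^2$. The same first-to-second-order switch recurs in the final telescoping. A correct version needs an additional $\eta_{\text{base}}$-independent bias of order $G^2$. More sharply, this bias is controlled by how far the normalized weights $p_k\phi(\tau_k^{(r)})/S^{(r)}$ (set to zero for absent clients) are from $p_k$. Your route, with the caveat below, delivers such a version.

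\paragraph{Secondary: the step-size condition.} Absorbing the recycled past-gradient terms needs $\eta_{\text{base}}$ smaller than $1/(8LE_{\max})$ by a factor depending on $\tau_{\max}$ and $E_{\max}/E_{\min}$. That is an additional hypothesis, not something that can be folded into $C_0,C_1$. It is also genuinely necessary: with one client and a constant delay, gradient descent on a quadratic (so $G=\sigma=0$ and the right-hand side stays bounded) diverges at $\eta_{\text{base}}L=\tfrac18$ once the delay exceeds a dozen rounds.

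The paper likewise relies on an unstated ``small enough $\eta_{\text{base}}$''. Its recursion for $\mathbb{E}\|\Delta^{(t)}\|^2$, with the factor $1/(1-\gamma R)$, even ties that condition to $R$. Your window-wise absorption, in which each past gradient is recycled a number of times depending only on $\tau_{\max}$, is the cleaner way to handle this part.
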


\begin{remark}[Convergence]
Theorem~\ref{thm:main} matches the standard $\mathcal{O}(1/\sqrt{R})$ stationarity rate for non-convex FL when $\eta_{\text{base}}=\Theta(1/\sqrt{R})$. The bias scales with heterogeneity ($G^2$), stochasticity ($\sigma^2$), and the square of the staleness bound ($\textcolor{red}{\tau_{\max}^2}$), underscoring the importance of admission-induced staleness control. Our convergence bound deviates from the standard FL algorithm in three ways: 1) Our asynchronous approach allows updates from models up to $\textcolor{red}{\tau_{\max}}$ rounds stale, introducing the penalty term $L^2 \textcolor{red}{\tau_{\max}^2}$ in the bias. However, $\textcolor{red}{\tau_{\max}}$ is a controlled system parameter that does not grow with $R$. 2) We allow adaptive local steps, introducing $\textcolor{blue}{E_{\min}}$ and $\textcolor{blue}{E_{\max}}$. The optimization error scales with the slowest client while the bias scales with the fastest client. 3) We introduce adaptive learning rate that is computed based on $\textcolor{purple}{\eta_{\text{base}}}$ which prevents clients with more local steps from dominating the updates. We demonstrate the convergence of \fedqueue{} empirically in Section~\ref{sec:experiments}.
\end{remark}

\begin{remark}[Queue Prediction Threshold]
Lemma~\ref{lem:bounded-staleness-queue} provides a high-probability bound on staleness under the buffering rule by selecting an \emph{analysis threshold} $\gamma$ and setting $\tau_{\max}=\lceil 1+\gamma\rceil$.
Importantly, $\gamma$ is not an algorithm parameter in \fedqueue{}; it only indexes the probabilistic arrival-time event used in the staleness bound.
\end{remark}

The proofs of Lemma~\ref{lem:bounded-staleness-queue} and Theorem~\ref{thm:main} are deferred to the Appendices~\ref{app:queue} and \ref{app:proof}.

\section{Experiments}
\label{sec:experiments}
\fedqueue{} is evaluated through two sets of experiments. First, Section~\ref{sec:llama} demonstrates superior time-to-quality and final model performance in a real-world cross-facility training on production HPC systems under unpredictable queue delays. Second, Section~\ref{sec:controlled} uses controlled simulations to validate our convergence theory, quantify performance gains under varying queue variability and client arrivals while demonstrating resource efficiency improvements.  Across these set of experiments, we use four baselines: FedAvg, FedAsync, FedBuff, and FedCompass. More details on the experimental setup are provided in Appendices~\ref{app:large-plots} and~\ref{app:controlled}.

\subsection{Large-Scale Cross-Facility Evaluation}
\label{sec:llama}

\textbf{Experimental setup.} This experiment aims to measure efficiency of \fedqueue{} in real-world deployment. Therefore, four production HPC facilities participate as FL clients: Aurora and Polaris at Argonne Leadership Computing Facilities (ALCF), Perlmutter at the National Energy Research Scientific Computing Center (NERSC), and Frontier at Oak Ridge Leadership Computing Facility (OLCF). The FL server runs on a separate dedicated cluster. Each facility allocates two GPU nodes for client training  and we use \textsc{APPFL}\footnote{https://github.com/APPFL/APPFL}~\citep{li2024advances, ryu2022appfl} to implement all methods. We also use \textsc{Globus} suite~\citep{li2022funcx, 10943420241281744} to submit and monitor jobs across facilities. 

\textbf{Model, dataset, and training.} We fine-tune a pretrained LLaMA2-7B model~\citep{touvron2023llama2openfoundation} on SMolInstruct~\citep{yu2024llasmol}, a curated chemistry instruction dataset partitioned across the four facilities by task categories. Each algorithm runs until a wall-clock budget of 17,000 seconds is exhausted and uses time-to-quality as a measurement to identify performance. To ensure fair comparison under realistic but comparable conditions, all algorithms were executed during roughly similar times of day, so that facilities experienced similar load patterns, while still being subject to the stochastic queue dynamics.

\subsubsection{Real-World Deployment Efficiency}
\label{sec:deployment-efficiency}
\begin{figure}[!ht]
  \centering
\includegraphics[width=0.9\columnwidth]{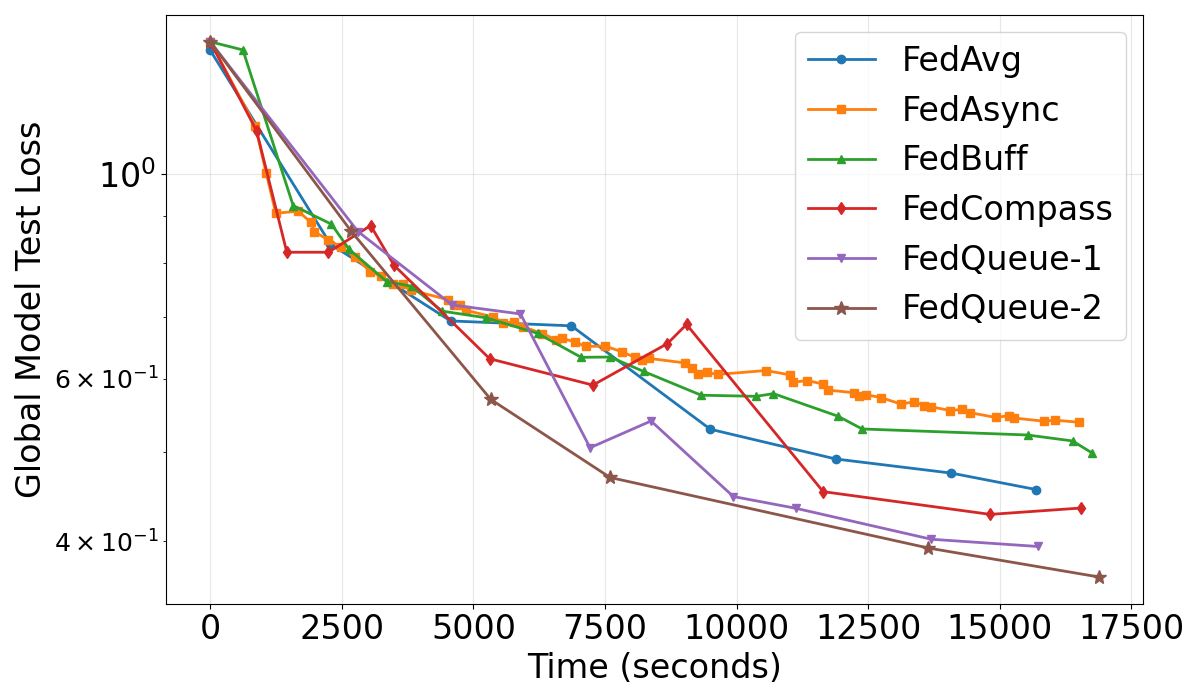}
  \caption{\textbf{Test loss of the federated global models versus wall-clock time across all algorithms.} There are two configurations of \fedqueue{} tested. \fedqueue{}-1 has $(T_\text{sync}, \delta) = (20\text{min}, 1\text{min})$ while \fedqueue{}-2 has $(T_\text{sync}, \delta) = (40\text{min}, 2\text{min})$. \fedqueue{} starts to achieve lower test loss once it builds accurate queue and compute estimates of the HPC systems and ultimately reaches the smallest test among the algorithms. Additional details on the setup is available in Appendix~\ref{app:large-plots}.}
  \label{fig:llama_ttq}
  \vspace{-0.75em}
\end{figure}

\textbf{Time-to-threshold analysis.}
Figure~\ref{fig:llama_ttq} shows global model test loss versus wall-clock time for all methods. To quantify deployment efficiency, we measure the time required for each algorithm to reach several target loss thresholds ($\ell \in \{1.0, 0.8, 0.6, 0.4\}$) and report the final loss values at the wall-clock budget. Table~\ref{tab:time-to-threshold} summarizes these results with detailed facility-level loss trajectories provided in Appendix~\ref{app:large-plots}. As noted, \fedqueue{} achieves the best final loss values while being the only algorithm to attain loss below $0.4$ and reaches $0.6$ approximately $1.78\times$ faster than FedAvg and $1.36\times$ faster than the FedCompass. FedCompass degrades as queue conditions shift over the roughly 5-hour window even with impressive early performance.

\begin{table}[!ht]
  \centering
  \small
  \caption{\textbf{Time-to-loss-threshold comparison.} 
  Time required to reach target test loss thresholds for each algorithm. ``—" indicates threshold not reached within 17,000 second budget. FedAsync converges quickly to early thresholds but plateaus at poor final loss. Only \fedqueue{} reaches loss value below $0.4$.}
  \label{tab:time-to-threshold}
  \vspace{-0.25em}
  \adjustbox{max width = 0.99\columnwidth}{\begin{tabular}{lcccccc}
    \toprule
    Threshold & FedAvg & FedAsync & FedBuff & FedCompass & \fedqueue{}-1 & \fedqueue{}-2 \\
    \midrule
    1.0 & 2284 & \textbf{1245} & 1574 & 1454 & 2807 & 2680 \\
    0.8 & 4567 & \textbf{3038} & 3351 & 3490 & 4597 & 5331 \\
    0.6 & 9502 & 11079 & 9328 & 7270 & 7216 & \textbf{5331} \\
    0.4 & — & — & — & — & \textbf{15727} & 16881 \\
    \midrule
    Final loss & 0.4549 & 0.5381 & 0.4987 & 0.4345 & 0.3947 & \textbf{0.3658} \\
    \bottomrule
  \end{tabular}}
  \vspace{-0.75em}
\end{table}

\subsubsection{Queue Dynamics and Admission Behavior}
\label{sec:queue-admission}

\textbf{Observed queue patterns.}
We use Globus Compute Endpoint to submit jobs to its native production scheduler ~(PBS or Slurm) with Table~\ref{tab:systems_training} detailing the queue conditions. Notably, queue times exhibit high variance across facilities with random delays, reflecting realistic heterogeneity in production HPC environments forcing static profiling approaches (e.g., FedCompass) to struggle in production settings.

\begin{table}[!ht]
  \centering
  \small
  \caption{\textbf{Cross-facility systems and observed queue conditions.}
  Hardware configuration and queue behavior observed across four production HPC facilities. Facilities operate different schedulers with heterogeneous GPU counts. Queue times exhibit substantial cross-facility heterogeneity and high within-facility variance. Statistics computed over all jobs during the experimental period.}
  \label{tab:systems_training}
  \vspace{-0.25em}
  \adjustbox{max width = 0.99\columnwidth}{\begin{tabular}{lccccc}
    \toprule
     & & \multicolumn{2}{c}{GPUs} & \multicolumn{2}{c}{Queue time (sec.)} \\
    \cmidrule(lr){3-4} \cmidrule(lr){5-6}
    Facility & Scheduler & per node & per job & median & p90 \\
    \midrule
    ALCF-Aurora & PBS & 12 & 24 & 1015.53 & 1638.38 \\
    ALCF-Polaris & PBS & 4 & 8 & 471.15 & 1673.44 \\
    OLCF-Frontier & Slurm & 8 & 16 & 741.14 & 1188.86 \\
    NERSC-Perlmutter & Slurm & 4 & 8 & 673.68 & 1668.29 \\
    \bottomrule
  \end{tabular}}
  \vspace{-0.75em}
\end{table}

\textbf{Prediction error characterization.} We characterize empirically the queue-delay prediction error underlying the sub-Gaussian condition in Lemma~\ref{lem:bounded-staleness-queue}, in addition to the evident cross-facility queue variability in Table 2. After excluding 1–2 rounds per facility corresponding to anomalous scheduler spikes, cleaned prediction errors have near-zero means and standard deviations of 39--114 sec---about $3–10\%$ of $T_\text{sync}$ for \fedqueue{}-1 and $2–5\%$ for \fedqueue{}-2---small relative to the synchronization horizon. These statistics are descriptive given the limited rounds per facility rather than a formal test of the tail condition; the assumption serves as a sufficient condition under which Lemma~\ref{lem:bounded-staleness-queue} yields bounded staleness, a consequence we validate directly in Table~\ref{tab:admission_stats}. 

\textbf{Admission decisions and staleness.}
Table~\ref{tab:admission_stats} summarizes \fedqueue{}'s admission behavior where \fedqueue{} admits 71.7\% of submitted updates. It maintains bounded staleness with maximum observed staleness is one, consistent with Lemma~\ref{lem:bounded-staleness-queue}, suggesting that the anomalous scheduler spikes are handled gracefully by the buffering mechanism rather than causing staleness violations. 

\begin{table}[!ht]
  \centering
  \small
  \caption{\textbf{Combined admission and utilization statistics from the two runs of \fedqueue{}.} Max delay ratio for each HPC system is defined as $\max_{r}\left(\frac{t_k}{T_{\mathrm{sync}}}\right)$.}
  \label{tab:admission_stats}
  \vspace{-0.25em}
  \adjustbox{max width = 0.99\columnwidth}{\begin{tabular}{lcccc}
    \toprule
    Facility & Jobs submitted & Admitted & Deferred & Max Delay$\downarrow$ \\
    \midrule
    ALCF-Aurora & 10 & 6 & 4 & 1.34\\
    ALCF-Polaris & 10 & 6 & 4 & 1.33 \\
    OLCF-Frontier & 12 & 10 & 2 & 1.16\\
    NERSC-Perlmutter & 12 & 10 & 2 & \textbf{1.09} \\
    \bottomrule
  \end{tabular}}
  \vspace{-0.75em}
\end{table}

\subsection{Controlled Synthetic Queue Experiments}
\label{sec:controlled}
To further understand the reasons behind impressive performance of \fedqueue{}, we run controlled experiments under synthetic queue dynamics. We defer extended analyses to Appendix \ref{app:controlled} and focus on main results. 

\textbf{Setup.}
We simulate $K=4$ clients with non-IID MNIST data partitions with a  CNN and a cross-entropy loss. We measure test accuracy and fix the optimizer, batch size, evaluation cadence, and per-round resource requests across all baselines; differences are confined to orchestration (sync/async), budgeting, admission, and aggregation logic. Additional details are provided in Appendix~\ref{app:exp_overview}. We primarily report results with $K=4$ in the main text; detailed scalability with $K = 8$ and $K = 12$ is deferred to Appendix~\ref{app:scalability}. 

\subsubsection{Convergence Under Queue Variability}
\label{sec:synthetic_convergence}
We study the impact of queue variability on convergence by sweeping the queue-noise parameter $\rho_k \in \{0.1, 0.5, 0.9\}$, while holding the model architecture, optimizer, batch size, and
per-round resource requests fixed. Performance is evaluated using wall-clock \emph{time-to-target accuracy} ($95\%$), \emph{maximum achieved accuracy}, and the \emph{data movement ratio} required to reach $95\%$ accuracy.

\textbf{\fedqueue{} consistently accelerates convergence under queue variability.} Across all queue regimes, \fedqueue{} achieves faster time-to-target accuracy than the other baselines with the gains becoming most pronounced under high queue variance. Specifically, when $\rho_k = 0.9,$ referring to high queue variance, \fedqueue{} improves time-to-target by
$37\%$ relative to FedAvg, 35\% relative to FedBuff, and $60\%$ relative to FedAsync and outperforms FedCompass by approximately $39\%$. These improvements,
indicated in Table~\ref{tab:synthetic_summary} and
Figure~\ref{fig:synthetic_time_to_quality} cement that \fedqueue{} maintains rapid and stable convergence even when queue noise increases.

\begin{figure}[!ht]
  \centering
  \includegraphics[width=0.8\linewidth]{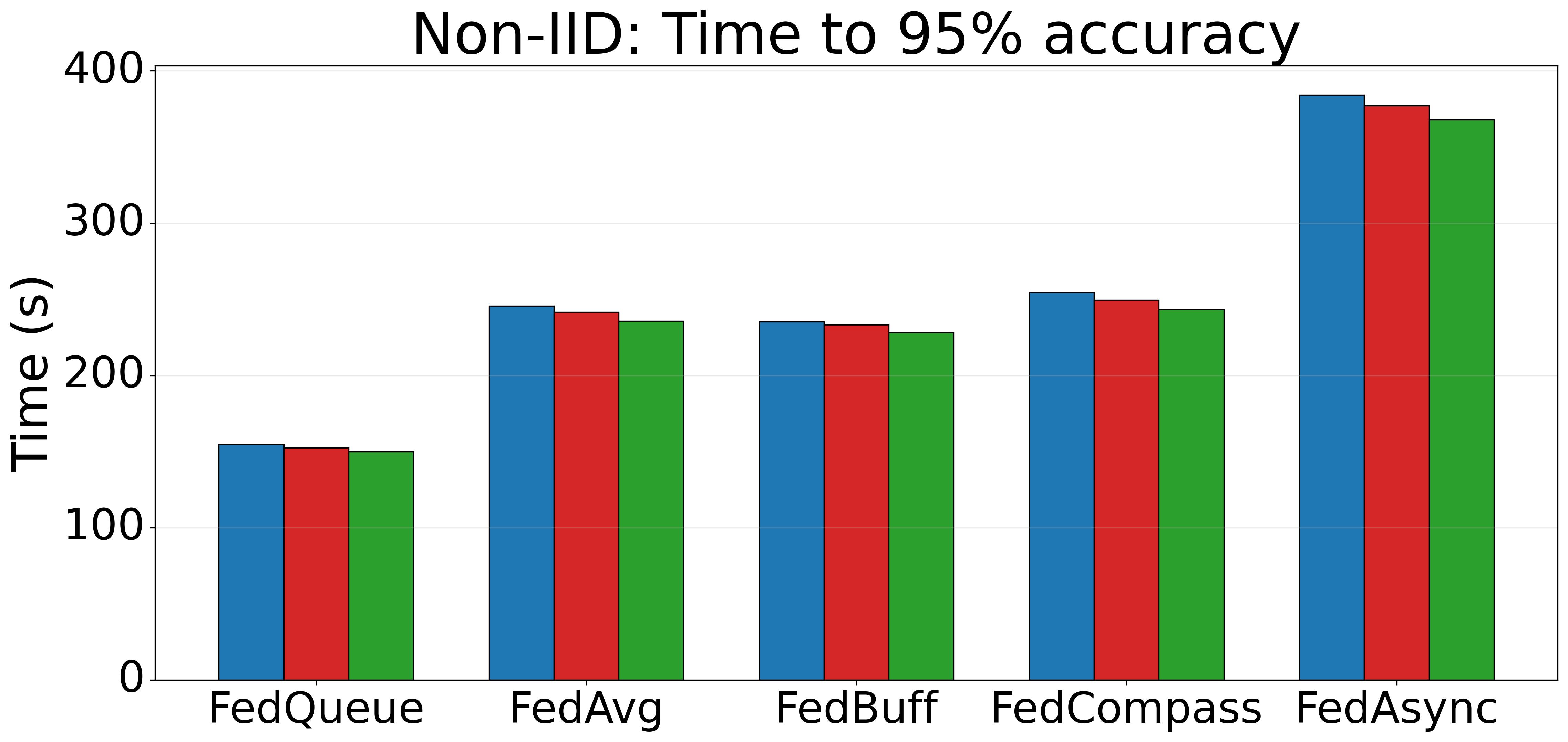}
    \caption{\textbf{Time-to-quality.} Validation accuracy vs.\ elapsed time to reach 95\% accuracy under increasing queue variance $\rho_k$ (\legendbox{mplblue} $\rho{=}0.9$,
\legendbox{mplred} $\rho{=}0.5$,
\legendbox{mplgreen} $\rho{=}0.1$.)}
  \label{fig:synthetic_time_to_quality}
  \vspace{-0.75em}
\end{figure}

\textbf{Faster convergence is achieved with significantly improved resource efficiency.}
Beyond wall-clock speedups, \fedqueue{} also exhibits superior communication efficiency and local resource utilization. Analysis
of the model movement ratio $D_r$ and total local steps $\#E_{k}$ in Table~\ref{tab:synthetic_summary} shows that \fedqueue{} saves
model transfers by up to 67\%, and approximately $2.1\times$ reduction in local resource usage as compared to the baselines. These speedups persist as the number of clients grows: under the same high-variance non-IID regime, \fedqueue{} achieves up to $1.7 \times$ speedup and over $3 \times$ reduction in local resource usage (Appendix~\ref{app:scalability}).

\begin{table}[!ht]
  \centering
  \small
  \caption{\textbf{Synthetic summary.}
  Max accuracy, time-to-target, and model movement ratio ($D_{r}$), and total local steps ($\#E_{k}$) to 95\% accuracy under a fixed non-IID partition and high queue variability ($\rho=0.9$).}
  \label{tab:synthetic_summary}
  \vspace{-0.25em}
  \adjustbox{max width = 0.99\columnwidth}{\begin{tabular}{lcccc}
    \toprule
    Method & Max-A$\uparrow$ & Time-to-$A^\star$ $\downarrow$ (s) & $D_{r}$ $\downarrow$ & $\#E_{k}$ $\downarrow$ \\
    \midrule
\fedqueue{} & $96.62 \pm 0.18$ & $\mathbf{154.72 \pm 5.56}$ & $\mathbf{1.00}$ & $\mathbf{5893}$\\
FedAvg     & $95.63 \pm 0.69$ & $245.60 \pm 4.29$ & $1.51$ & $11520$\\
FedAsync  & $95.75 \pm 0.94$ & $384.07 \pm 8.18$ & $1.67$ & $12400$\\
FedBuff   & $97.45 \pm 0.16$ & $235.10 \pm 4.27$ & $1.11$ & $7440$\\
FedCompass & $\mathbf{97.21 \pm 0.35}$ & $254.44 \pm 7.79$ & $1.15$ & $8449$ \\ 
    \bottomrule
  \end{tabular}}
  \vspace{-0.75em}
\end{table}

\subsubsection{Sensitivity to Arrival Variability.}
\label{sec:synthetic_staleness}
\textbf{The safety buffer $\delta$ controls a fundamental trade-off: staleness vs. convergence speed.} In our budgeting rule, $\delta$ provides slack against queue-delay uncertainty, making the per-round job-time budget more conservative. Larger $\delta$ increases the fraction of clients that finish and arrive before the cutoff $T_\text{sync}$, thereby concentrating staleness and improving stability (as in Lemma~\ref{lem:bounded-staleness-queue}), but it also reduces the effective amount of local progress per time, which can slow the time to $A^*$. Figure~\ref{fig:C4_delta_cdf} demonstrates this effect: larger $\delta$ (green histogram) leads to a histogram whose tail does not cross the $T_{\text{sync}}$ line, but achieves slower time to $A^*$ (right panel). Conversely, smaller $\delta$ (blue histogram) crosses $T_{\text{sync}}$ more frequently, implying more buffered (stale) updates, but it reaches $A^*$ faster due to more aggressive local work.

Queue variability $\rho$ and safety buffer $\delta$ are typically not in the designer's control, as they depend on HPC system characteristics and scheduler behavior. Our theory and controlled experiments show that \fedqueue{} is robust to variation in queue dynamics and client arrival patterns. Appendix~\ref{app:controlled-sweeps} provides detailed analysis of how sweeping $\rho$ and $\gamma$ affects client arrival probability beyond $T_{\text{sync}}$, time-to-quality, and maximum delay.

\begin{figure}[!ht]
  \centering
  \includegraphics[width=0.9\linewidth]{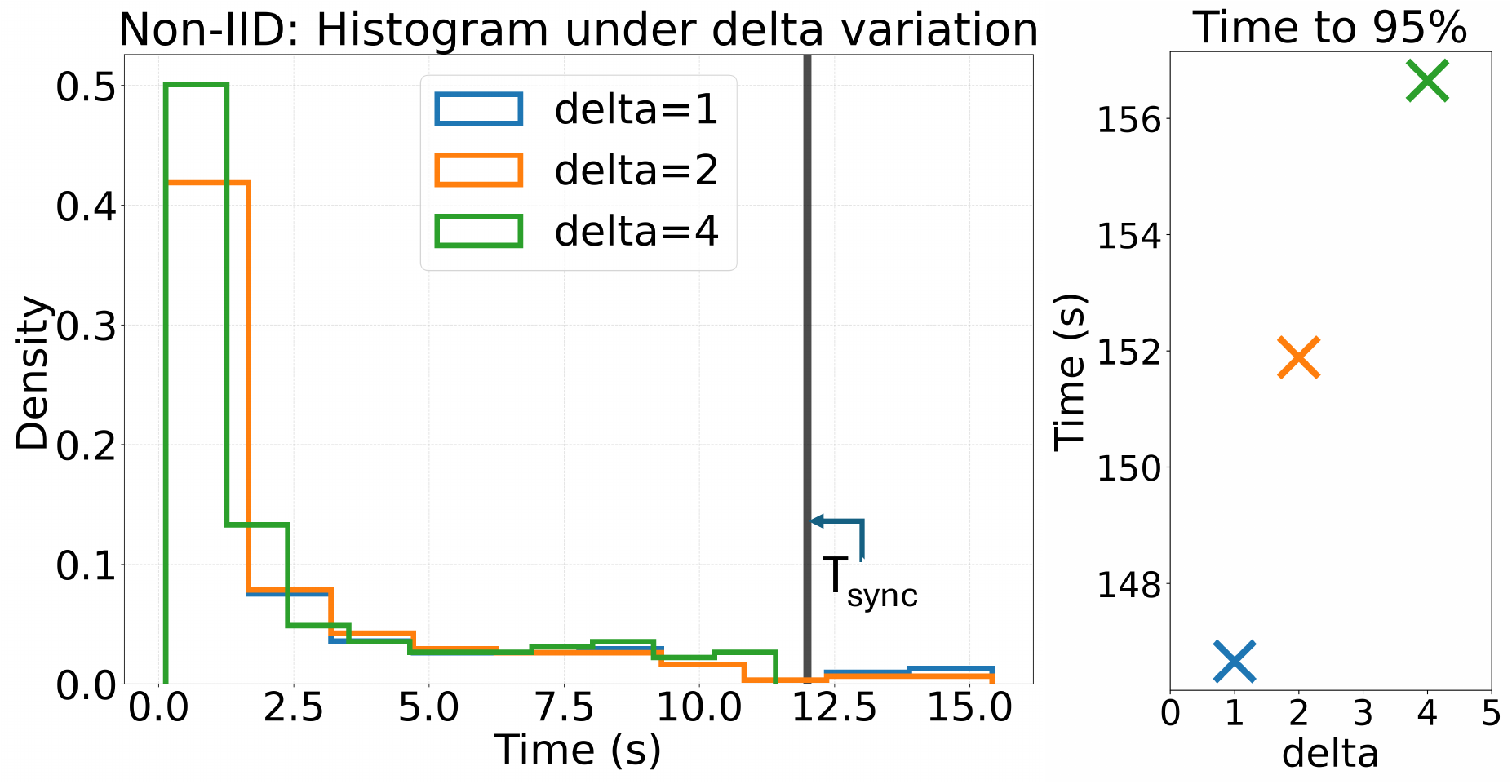}
  \caption{\textbf{Impact of admission buffer.}
  (Left) Effect of scaling the buffer $\delta$ on histogram of arrival times and (Right) corresponding time to quality (95\%). This test exposes the trade-off between better convergence value and convergence time in the presence of client arrival variability. Note that the histograms of $\delta=1,2$ are almost identical and overlapped each other.}
  \label{fig:C4_delta_cdf}
  \vspace{-0.75em}
\end{figure}

\subsubsection{Ablation Studies}
\label{sec:exp-ablations}
There are three main components in \fedqueue{}: EWMA, staleness decay and inverse LR. We ablate these components to understand their impact on performance.
We consider: (i) \textbf{w/o EWMA}: replace queue prediction by a static estimate $\hat{q}_k^{(r)} \equiv \mu_k$. (ii) \textbf{w/o staleness decay}: set $\phi(\tau) \equiv 1$ in aggregation weights, and (iii) \textbf{w/o inverse LR scaling}: use a fixed learning rate per facility independent of $E_k^{(r)}$. These results are presented in Table \ref{tab:ablation} with 85\% target accuracy.
\begin{table}[!ht]
  \centering
  \small
  \caption{\textbf{Ablation results.} Final Accuracy, time-to-target accuracy (85\%), empirical probability of arrival beyond the cutoff ($\mathbb{P}$), Normalized expectation of delay ($\hat{\mathbb{E}}_{d}$), Normalized maximum delay ($R_{d}$) for different \fedqueue{} variants.}
  \label{tab:ablation}
  \vspace{-0.25em}
  \adjustbox{max width = 0.99\columnwidth}{\begin{tabular}{lccccc}
    \toprule
    Method variant 
\    & Final $A\%$ $\uparrow$ 
    & Time-to-$85\%$ (s) $\downarrow$ &$\mathbb{P}\downarrow$
    & $\hat{\mathbb{E}}_{d}$ $\downarrow$ & $R_{d}$ $\downarrow$\\
    \midrule
    \fedqueue{} (Baseline) & 96.62 & 26.68  & 0.015& 1.13 & 1.17\\
    w/o inverse LR         & \textbf{90.89} & 28.13  & 0.020 &  1.15 & 1.16\\
    w/o EWMA               & 96.01 & \textbf{116.59} & 0.035& 1.32 & 1.38\\
    w/o staleness decay & \textbf{87.18} & 34.61  & 0.015& 1.14 & 1.17  \\
    \bottomrule
  \end{tabular}}
  \vspace{-0.75em}
\end{table}

\textbf{Queue prediction matters.} Removing EWMA degrades performance~(time to $85\%$) by 330\% and  normalized expected delay~($\hat{\mathbb{E}}_{d}$) by 17\%, despite comparable final accuracy. Therefore, EWMA is essential  in the presence of non-stationary scheduler dynamics.

\textbf{Staleness weighting is stabilizing.} Disabling staleness decay ($\phi(\tau)\equiv 1$) reduces final accuracy by approximately $10\%$ and increases time-to-target by approximately $30\%$, a behavior that aligns with Theorem \ref{thm:main}. Notably without decay, stale updates contribute disproportionately to updates increasing optimization variance.

\textbf{Inverse LR scaling prevents domination.} Removing inverse learning-rate scaling leads to an approximately 6\% drop in final accuracy and an approximately 5\% increase in time-to-target. 

Notably staleness significantly impacts accuracy while EWMA significantly impacts solution time. It is expected that, this trade-off can be decided by choosing the values of $\alpha$ and $\gamma$ depending on the requirement of the practitioner. However, \fedqueue{} allows this flexibility.

\subsection{Additional Dataset}
\label{sec:cifar100}

We additionally evaluate \fedqueue{} on CIFAR-100 with a ResNet-18 backbone, holding the queue regimes and configurations fixed across methods. We measure time and local steps ($\#E_{k}$), alongside maximum accuracy, results are presented in Table
\ref{tab:cifar100}. Under the IID partition, \fedqueue{} achieves performance comparable to FedAvg, matching it on both time and local steps to threshold. FedBuff and FedCompass attain higher maximum accuracy but require roughly $2 \times$ the local steps to do so. Under the non-IID partition, \fedqueue{} is the only method to reach $45\%$ accuracy within the wall-clock budget, and attains the highest maximum accuracy overall, mirroring the pattern observed on MNIST.

\begin{table}[!ht]
\centering
\caption{CIFAR-100 with ResNet-18. Max accuracy, time and total local steps under IID and non-IID partitions. ``--'' denotes the threshold was never reached.}
\label{tab:cifar100}
\tiny
\begin{tabular}{lcccccc}
\toprule
& \multicolumn{3}{c}{IID} & \multicolumn{3}{c}{Non-IID} \\
\cmidrule(lr){2-4} \cmidrule(lr){5-7}
Method & Max-A $\uparrow$ & Time $\downarrow$ & $\#E_{k}$ $\downarrow$ & Max-A $\uparrow$ & Time $\downarrow$ & $\#E_{k}$ $\downarrow$ \\
\midrule
FedAvg & 50.6 & \textbf{315.2} & \textbf{14208} & 42.6 & -- & -- \\
FedAsync & 52.8 & 698.2 & 40300 & 41.6 & -- & -- \\
FedBuff & \textbf{56.4} & 465.0 & 26660 & 44.6 & -- & -- \\
FedCompass & 55.6 & 1368.0 & 25774 & 44.7 & -- & -- \\
\fedqueue{} & 51.7 & 384.0 & 14422 & \textbf{46.9} & \textbf{465.1} & \textbf{18332} \\
\bottomrule
\end{tabular}
\end{table}

\vspace{-8pt}

\section{Conclusion and Limitations}
\label{sec:conclusion}

We introduced \fedqueue{}, a queue-aware federated learning protocol that addresses the challenge of unpredictable scheduler admission delays in cross-facility training. \fedqueue{} achieves provably bounded staleness and maintains convergence guarantees under stochastic queue dynamics. Experiments demonstrate substantial improvements over existing approaches in both real-world cross-facility deployment and controlled simulations.

Although our buffering mechanism handles late arrivals gracefully, we do not model job failures. A failed job is indistinguishable from a late arrival and results in a lost update from the server. Most HPC schedulers support automatic requeue for preempted jobs, but explicit failure detection and resubmission of permanent failures are natural extensions. Our EWMA-based predictor could be substituted by approaches that are more responsive to abrupt step-changes in queue characteristics; because our convergence analysis is prediction-agnostic, ML-based predictors or schemes that query scheduler state directly can be substituted. Finally, our cross-facility setting naturally involves a small number of HPC clients and complementary directions such as data-quality-aware aggregation are left to future work.

\section*{Acknowledgment}
This work was supported by the U.S. Department of Energy, Office of Science, Advanced Scientific Computing Research, under Contract DE-AC02-06CH11357. An award of computer time was provided by the ASCR Leadership Computing Challenge (ALCC) program. This research used resources of the Argonne Leadership Computing Facility, which is a U.S. Department of Energy Office of Science User Facility operated under contract DE-AC02-06CH11357. This research used resources of the Oak Ridge Leadership Computing Facility at the Oak Ridge National Laboratory, which is supported by the Office of Science of the U.S. Department of Energy under Contract No. DE-AC05-00OR22725. This research used resources of the National Energy Research Scientific Computing Center (NERSC), a Department of Energy User Facility using NERSC award ALCC-ERCAP0038201. We gratefully acknowledge the computing resources provided on Improv, a high-performance computing cluster operated by the Laboratory Computing Resource Center at Argonne National Laboratory.

\section*{Impact Statement}
This paper presents work whose goal is to advance the field of Machine Learning. There are many potential societal consequences of our work, none which we feel must be specifically highlighted here.

\bibliography{FedQueue}
\bibliographystyle{icml2026}


\newpage

\appendix
\onecolumn

\section{Proof of Lemma~\ref{lem:bounded-staleness-queue}}
\label{app:queue}

Fix any facility $k \in [K]$ and round $r \in \{0,\dots,R-1\}$.
By definition of $J_k^{(r)}$ and assuming that local training time respects the job-time budget, i.e., $h_k^{(r)} \le J_k^{(r)} \text{for all }k,r$,
\[
h_k^{(r)} \;\le\; J_k^{(r)} 
= T_{\text{sync}} - \hat q_k^{(r)} - \delta.
\]
Hence the total completion time of the job submitted at time $r T_{\text{sync}}$ satisfies
\[
a_k^{(r)} 
= r T_{\text{sync}} + q_k^{(r)} + h_k^{(r)}
\;\le\;
r T_{\text{sync}} + q_k^{(r)} + T_{\text{sync}} - \hat q_k^{(r)} - \delta
= r T_{\text{sync}} + T_{\text{sync}} + e_k^{(r)} - \delta,
\]
where $e_k^{(r)} = q_k^{(r)} - \hat q_k^{(r)}$.
Therefore, if the prediction error satisfies 
\begin{equation}
\label{eq:error-threshold}
e_k^{(r)} \;\le\; \gamma T_{\text{sync}} + \delta,
\end{equation}
then the completion time bound above implies
\[
 a_k^{(r)} \le rT_{\text{sync}} + T_{\text{sync}} + (\gamma T_{\text{sync}} + \delta) - \delta
 = rT_{\text{sync}} + (1+\gamma)T_{\text{sync}}.
\]
Under the buffering rule in Section~\ref{sec:staleness} (aggregation at cutoffs $(j+1)T_{\text{sync}}$), any update that arrives by time $rT_{\text{sync}}+(1+\gamma)T_{\text{sync}}$ is incorporated by at most $\lceil 1+\gamma \rceil$ subsequent cutoffs, hence its staleness is at most $\tau_{\max}:=\lceil 1+\gamma\rceil$.

Using the sub-Gaussian assumption~\eqref{eq:subgaussian-error} and standard tail bounds,
we obtain for any $u>0$
\[
\mathbb{P}\big(e_k^{(r)} > u \mid \mathcal{F}_{r-1}\big)
\;\le\;
\exp\!\left(-\frac{u^2}{2 \rho_k^2}\right).
\]
Set $u = \gamma T_{\text{sync}} + \delta$.
Then
\[
\mathbb{P}\Big(e_k^{(r)} > \gamma T_{\text{sync}} + \delta \mid \mathcal{F}_{r-1}\Big)
\;\le\;
\exp\!\left(-\frac{(\gamma T_{\text{sync}} + \delta)^2}{2 \rho_k^2}\right).
\]
By the choice of $\delta$ in~\eqref{eq:delta-condition}, we have
\[
\exp\!\left(-\frac{(\gamma T_{\text{sync}} + \delta)^2}{2 \rho_k^2}\right)
\;\le\;
\frac{\varepsilon}{K R}
\quad\text{for all } k \in [K].
\]
Thus, for each fixed $(k,r)$,
\[
\mathbb{P}\Big(e_k^{(r)} > \gamma T_{\text{sync}} + \delta\Big)
\le \frac{\varepsilon}{K R}.
\]

We now apply a union bound over all $k \in [K]$ and $r \in \{0,\dots,R-1\}$:
\[
\mathbb{P}\Big( \exists k,r \text{ such that } e_k^{(r)} > \gamma T_{\text{sync}} + \delta \Big)
\;\le\;
K R \cdot \frac{\varepsilon}{K R}
= \varepsilon.
\]
Hence, with probability at least $1 - \varepsilon$, we have
\[
e_k^{(r)} \;\le\; \gamma T_{\text{sync}} + \delta
\quad\text{for all } k,r.
\]

Next we relate completion time to staleness under the buffering rule (Section~\ref{sec:staleness}).
If a job submitted at time $rT_{\text{sync}}$ completes by $rT_{\text{sync}}+(1+\gamma)T_{\text{sync}}$, then at most $\lceil 1+\gamma\rceil$ end-of-round cutoffs elapse between submission and completion.
Therefore, the update is incorporated within staleness at most $\tau_{\max}:=\lceil 1+\gamma\rceil$.

Finally, we absorb the factor $\varepsilon/2$ into $\varepsilon$ by redefining the target failure probability (or by tightening the constant in~\eqref{eq:delta-condition}).
Thus, for any prescribed $\varepsilon \in (0,1)$, by choosing $\delta$ according to~\eqref{eq:delta-condition} (with a slightly larger constant if needed), we obtain
\[
\mathbb{P}\Big( \tau_k^{(r)} \le \tau_{\max} \text{ for all } k,r \Big) \;\ge\; 1 - \varepsilon,
\]
which proves~\eqref{eq:bounded-staleness-result}. \qed

\section{Proof of Theorem \ref{thm:main}}
\label{app:proof}

We provide a convergence proof for \fedqueue{} under Assumptions~\ref{assump:smooth}--\ref{assump:dissimilarity} and bounded staleness $\tau_k^{(r)}\le \tau_{\max}$.

Recall the notation: $w^{(r)}$ is the global model at round $r$, $R_k^{(r)}$ is the round index of the global model used by client $k$ to compute its update incorporated at round $r$, $\tau_k^{(r)} = r - R_k^{(r)}$ is the staleness, and $d_k^{(r)} := w^{(r)} - w^{(R_k^{(r)})}$ is the model drift for client $k$ at round $r$.

The local update from client $k$ at round $r$ is $\Delta_k^{(r)} = w_k^{(r,E_k^{(r)})} - w^{(R_k^{(r)})}$, and the global update is

$$
\Delta^{(r)} := w^{(r+1)} - w^{(r)} 
= \frac{1}{S^{(r)}}\sum_{k\in A^{(r)}} p_k \phi(\tau_k^{(r)})\,\Delta_k^{(r)},
\quad S^{(r)} := \sum_{k\in A^{(r)}} p_k \phi(\tau_k^{(r)}).
$$

We denote $E_{\min} := \min_{k,r} E_k^{(r)}$.

\subsection{One-Step Descent Inequality}

\begin{lemma}[Descent under staleness]
\label{lem:descent}
Under Assumption \ref{assump:smooth}, for any round $r$, we have:
\begin{align}
F(w^{(r+1)}) &\leq F(w^{(r)}) + \frac{1}{S^{(r)}} \sum_{k \in \mathcal{A}^{(r)}} p_k \phi(\tau_k^{(r)}) \langle \nabla F(w^{(R_k^{(r)})}), \Delta_k^{(r)} \rangle \nonumber \\
&\quad + \frac{1}{S^{(r)}} \sum_{k \in \mathcal{A}^{(r)}} p_k \phi(\tau_k^{(r)}) \left(\frac{L^2}{2} \|d_k^{(r)}\|^2 + \frac{1}{2} \|\Delta_k^{(r)}\|^2\right) + \frac{L}{2} \|\Delta^{(r)}\|^2.
\end{align}
\end{lemma}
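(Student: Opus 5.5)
The plan is to start from the standard $L$-smoothness descent inequality for $F$ applied to the global step $\Delta^{(r)} = w^{(r+1)}-w^{(r)}$:
\[
F(w^{(r+1)}) \le F(w^{(r)}) + \langle \nabla F(w^{(r)}), \Delta^{(r)}\rangle + \tfrac{L}{2}\|\Delta^{(r)}\|^2 .
\]
This already produces the final term $\frac{L}{2}\|\Delta^{(r)}\|^2$ of Lemma~\ref{lem:descent}. Everything else will come from rewriting the linear term.

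Since $\Delta^{(r)}$ is the normalized convex combination $\frac{1}{S^{(r)}}\sum_{k\in\mathcal{A}^{(r)}} p_k\phi(\tau_k^{(r)})\Delta_k^{(r)}$, with weights $p_k\phi(\tau_k^{(r)})/S^{(r)}$ that are nonnegative and sum to one, linearity of the inner product gives
\[
\langle \nabla F(w^{(r)}), \Delta^{(r)}\rangle = \frac{1}{S^{(r)}}\sum_{k\in\mathcal{A}^{(r)}} p_k\phi(\tau_k^{(r)}) \langle \nabla F(w^{(r)}), \Delta_k^{(r)}\rangle .
\]
The update $\Delta_k^{(r)}$ was computed at the stale model $w^{(R_k^{(r)})}$, so for each $k$ I will add and subtract the stale gradient:
\[
\langle \nabla F(w^{(r)}), \Delta_k^{(r)}\rangle = \langle \nabla F(w^{(R_k^{(r)})}), \Delta_k^{(r)}\rangle + \langle \nabla F(w^{(r)}) - \nabla F(w^{(R_k^{(r)})}), \Delta_k^{(r)}\rangle .
\]

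The cross term is handled by Young's inequality $\langle a,b\rangle \le \frac12\|a\|^2+\frac12\|b\|^2$, combined with Assumption~\ref{assump:smooth}. This gives
\[
\|\nabla F(w^{(r)}) - \nabla F(w^{(R_k^{(r)})})\|^2 \le L^2\|w^{(r)}-w^{(R_k^{(r)})}\|^2 = L^2\|d_k^{(r)}\|^2 .
\]
Hence each cross term is at most $\frac{L^2}{2}\|d_k^{(r)}\|^2 + \frac12\|\Delta_k^{(r)}\|^2$. Summing with the same nonnegative weights $p_k\phi(\tau_k^{(r)})/S^{(r)}$ yields exactly the middle sum in the lemma, and collecting the three pieces gives the claimed inequality.

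The argument is deterministic and holds pathwise for every realization, so no expectation is needed here. I will note that $S^{(r)}>0$ is required; this holds whenever $\mathcal{A}^{(r)}$ is nonempty, since $\phi\ge\phi_{\min}>0$. If $\mathcal{A}^{(r)}$ is empty, the server makes no update and the inequality is trivial under the convention that empty sums are zero.

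There is no real obstacle. The only point needing care is to keep the Young split with unit constants (rather than a tuned $\epsilon$), so that the coefficients come out as $\frac{L^2}{2}$ and $\frac12$ as stated. The cost of this choice is an extra $\frac12\|\Delta_k^{(r)}\|^2$ term, which the later analysis must absorb through the step-size condition $\eta_{\text{base}}\le 1/(8LE_{\max})$.
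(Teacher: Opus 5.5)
Your proposal is correct and follows the paper's proof essentially step for step. Both start from the $L$-smoothness descent inequality, expand the inner product linearly over the normalized aggregation, add and subtract the stale gradient $\nabla F(w^{(R_k^{(r)})})$, and bound the cross term by $\frac{L^2}{2}\|d_k^{(r)}\|^2+\frac12\|\Delta_k^{(r)}\|^2$. The only cosmetic difference is the order of the cross-term steps: you apply Young's inequality to the inner product first and then Lipschitz continuity to the squared gradient gap, whereas the paper uses Cauchy--Schwarz and Lipschitz continuity first (giving $L\|d_k^{(r)}\|\|\Delta_k^{(r)}\|$) and then Young, with identical constants.
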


\begin{proof}
By Assumption \ref{assump:smooth} (L-smoothness),
\[
F(w^{(r+1)}) \leq F(w^{(r)}) + \langle \nabla F(w^{(r)}), \Delta^{(r)} \rangle + \frac{L}{2} \|\Delta^{(r)}\|^2.
\]
Using the aggregation rule,
\begin{align*}
\langle \nabla F(w^{(r)}), \Delta^{(r)} \rangle &= \left\langle \nabla F(w^{(r)}), \frac{1}{S^{(r)}} \sum_{k \in \mathcal{A}^{(r)}} p_k \phi(\tau_k^{(r)}) \Delta_k^{(r)} \right\rangle 
= \frac{1}{S^{(r)}} \sum_{k \in \mathcal{A}^{(r)}} p_k \phi(\tau_k^{(r)}) \langle \nabla F(w^{(r)}), \Delta_k^{(r)} \rangle.
\end{align*}
For each $k \in \mathcal{A}^{(r)}$,
\begin{align*}
\langle \nabla F(w^{(r)}), \Delta_k^{(r)} \rangle &= \langle \nabla F(w^{(R_k^{(r)})}), \Delta_k^{(r)} \rangle + \langle \nabla F(w^{(r)}) - \nabla F(w^{(R_k^{(r)})}), \Delta_k^{(r)} \rangle.
\end{align*}
By Lipschitz continuity and Young’s inequality,
\begin{align*}
|\langle \nabla F(w^{(r)}) - \nabla F(w^{(R_k^{(r)})}), \Delta_k^{(r)} \rangle| &\leq \|\nabla F(w^{(r)}) - \nabla F(w^{(R_k^{(r)})})\| \|\Delta_k^{(r)}\| \\
&\leq L \|w^{(r)} - w^{(R_k^{(r)})}\| \|\Delta_k^{(r)}\|
= L \|d_k^{(r)}\| \|\Delta_k^{(r)}\| \\
&\leq \frac{L^2}{2} \|d_k^{(r)}\|^2 + \frac{1}{2} \|\Delta_k^{(r)}\|^2.
\end{align*}
Thus,
\[
\langle \nabla F(w^{(r)}), \Delta_k^{(r)} \rangle \leq \langle \nabla F(w^{(R_k^{(r)})}), \Delta_k^{(r)} \rangle + \frac{L^2}{2} \|d_k^{(r)}\|^2 + \frac{1}{2} \|\Delta_k^{(r)}\|^2.
\]
Substituting back completes the proof.
\end{proof}

\subsection{Staleness Drift Accumulation}

\begin{lemma}[Drift bound]
\label{lem:staleness}
If $\tau_k^{(r)} \leq \tau_{\max}$ for all $k, r$, then
\[
\|d_k^{(r)}\|^2 
\leq \tau_{\max} \sum_{t=R_k^{(r)}}^{r-1} \|\Delta^{(t)}\|^2
\leq \tau_{\max} \sum_{t=0}^{r-1} \|\Delta^{(t)}\|^2.
\]
\end{lemma}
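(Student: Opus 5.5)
The plan is to write the drift as a telescoping sum of global updates and then apply Cauchy--Schwarz. By definition $d_k^{(r)} = w^{(r)} - w^{(R_k^{(r)})}$. Since $w^{(t+1)} = w^{(t)} + \Delta^{(t)}$ for every round $t$, we get
\[
d_k^{(r)} = \sum_{t=R_k^{(r)}}^{r-1} \big(w^{(t+1)} - w^{(t)}\big) = \sum_{t=R_k^{(r)}}^{r-1} \Delta^{(t)}.
\]
This sum has exactly $\tau_k^{(r)} = r - R_k^{(r)}$ terms. If $\tau_k^{(r)}=0$, the sum is empty, $d_k^{(r)}=0$, and both inequalities hold trivially, so I treat that case separately.

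For $\tau_k^{(r)}\ge 1$, I apply the elementary inequality $\|\sum_{i=1}^n x_i\|^2 \le n\sum_{i=1}^n \|x_i\|^2$. This is Cauchy--Schwarz, or equivalently Jensen's inequality for the convex function $\|\cdot\|^2$ applied to the average of the $x_i$. It gives
\[
\|d_k^{(r)}\|^2 \le \tau_k^{(r)} \sum_{t=R_k^{(r)}}^{r-1} \|\Delta^{(t)}\|^2 \le \tau_{\max} \sum_{t=R_k^{(r)}}^{r-1} \|\Delta^{(t)}\|^2,
\]
where the last step uses the hypothesis $\tau_k^{(r)}\le\tau_{\max}$. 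For the second inequality of the lemma, I note that $R_k^{(r)}\ge 0$ and each summand $\|\Delta^{(t)}\|^2$ is nonnegative. Extending the index range from $\{R_k^{(r)},\dots,r-1\}$ to $\{0,\dots,r-1\}$ can therefore only increase the sum.

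There is no real obstacle; the argument is routine. The one point to check is bookkeeping: the iterate $w^{(R_k^{(r)})}$ that the client started from must be an actual earlier global iterate of the same recursion. This holds because the server broadcasts $w^{(s)}$ at round $s$ and tags the update with $s$, so $R_k^{(r)}=s$ and the telescoping over consecutive global updates is valid. The intermediate bound with the window sum $\sum_{t=R_k^{(r)}}^{r-1}$ is the one I expect to use later in the proof of Theorem~\ref{thm:main}. It makes each $\|\Delta^{(t)}\|^2$ appear at most $\tau_{\max}$ times when summing over $r$, which produces the $\tau_{\max}^2$ factor in the theorem.
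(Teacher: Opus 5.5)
Your proof is correct and follows the paper's argument: write $d_k^{(r)}=\sum_{t=R_k^{(r)}}^{r-1}\Delta^{(t)}$ as a telescoping sum, bound the squared norm of this $\tau_k^{(r)}$-term sum by $\tau_k^{(r)}$ times the sum of squared norms, and then use $\tau_k^{(r)}\le\tau_{\max}$. The only differences are minor: the paper uses the triangle inequality followed by Cauchy--Schwarz on the scalars $\|\Delta^{(t)}\|$, whereas you apply $\|\sum_i x_i\|^2\le n\sum_i\|x_i\|^2$ directly, and you also spell out the empty-sum case and the extension of the index range to start at $t=0$, which the paper leaves implicit.
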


\begin{proof}
The drift is the cumulative change in the global model over $\tau_k^{(r)}$ rounds:
\[
d_k^{(r)} = w^{(r)} - w^{(R_k^{(r)})} = \sum_{t=R_k^{(r)}}^{r-1} (w^{(t+1)} - w^{(t)}) = \sum_{t=R_k^{(r)}}^{r-1} \Delta^{(t)}.
\]

By the triangle inequality:
\[
\|d_k^{(r)}\| = \left\| \sum_{t=R_k^{(r)}}^{r-1} \Delta^{(t)}\right\| \leq \sum_{t=R_k^{(r)}}^{r-1} \|\Delta^{(t)}\|.
\]

By Cauchy-Schwarz:
\[
\left(\sum_{t=R_k^{(r)}}^{r-1} \|\Delta^{(t)}\|\right)^2 \leq (r - R_k^{(r)}) \sum_{t=R_k^{(r)}}^{r-1} \|\Delta^{(t)}\|^2 = \tau_k^{(r)} \sum_{t=R_k^{(r)}}^{r-1} \|\Delta^{(t)}\|^2.
\]

Since $\tau_k^{(r)} \leq \tau_{\max}$, the result follows.
\end{proof}

\subsection{Local Update Variance and Bias}

\begin{lemma}[Local update decomposition]
\label{lem:variance}
Under Assumptions \ref{assump:smooth}--\ref{assump:dissimilarity}, for each facility $k$ performing $E_k^{(r)}$ local SGD steps starting from $w^{(R_k^{(r)})}$ with learning rate $\eta_k^{(r)}\leq\frac{1}{4LE_k^{(r)}}$, there exist absolute constants $c_1,c_2,c_3>0$ (independent of $k,r$) such that
\begin{align}
\mathbb{E}[\langle \nabla F(w^{(R_k^{(r)})}), \Delta_k^{(r)} \rangle \mid w^{(R_k^{(r)})}] 
\leq -c_1 \eta_k^{(r)} E_k^{(r)} \|\nabla F(w^{(R_k^{(r)})})\|^2 + c_2 (\eta_k^{(r)})^2 (E_k^{(r)})^2 (G^2 + \sigma_k^2),\label{eq:lemma:progress_bound}
\end{align}
and
\begin{equation}
\mathbb{E}[\|\Delta_k^{(r)}\|^2 \mid w^{(R_k^{(r)})}] 
\leq c_3 ( \eta_k^{(r)} )^2 (E_k^{(r)})^2 \left( \|\nabla F(w_k^{(R_k^{(r)})})\|^2 + G^2 + \sigma_k^2 \right). \label{eq:lemma_variance}
\end{equation}
\end{lemma}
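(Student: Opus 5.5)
The plan is to unroll the local SGD trajectory of client $k$ from $w_0 := w^{(R_k^{(r)})}$. Write $\eta := \eta_k^{(r)}$, $E := E_k^{(r)}$, and let $w_j$, $j=0,\dots,E-1$, be the local iterates with stochastic gradients $g_j = \nabla F_k(w_j) + \xi_j$. Here $\mathbb{E}[\xi_j \mid w_j] = 0$ and $\mathbb{E}\|\xi_j\|^2 \le \sigma_k^2$ by Assumption~\ref{assump:variance}. Then $\Delta_k^{(r)} = -\eta\sum_{j<E} g_j$.

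I would prove the second-moment bound~\eqref{eq:lemma_variance} first, since the progress bound reuses it. Three steps give it:
\begin{enumerate}
\item By Jensen and the martingale structure of the noise, $\mathbb{E}\|\Delta_k^{(r)}\|^2 \le 2\eta^2 E\sum_j \mathbb{E}\|\nabla F_k(w_j)\|^2 + 2\eta^2 E\sigma_k^2$.
\item Decompose each local gradient as $\nabla F_k(w_j) = \nabla F(w_0) + (\nabla F_k(w_0)-\nabla F(w_0)) + (\nabla F_k(w_j)-\nabla F_k(w_0))$, and bound the three pieces by $\|\nabla F(w_0)\|^2$, by $G^2$ (Assumption~\ref{assump:dissimilarity}), and by $L^2\|w_j-w_0\|^2$.
\item Control the local drift $D_j := \mathbb{E}\|w_j - w_0\|^2$ with the standard recursion $D_{j+1} \le (1+\tfrac1{E})D_j + c\,\eta^2 E(\|\nabla F(w_0)\|^2 + G^2 + L^2 D_j) + \eta^2\sigma_k^2$. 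Under $\eta \le 1/(4LE)$ this closes to $D_j \le c'\eta^2E^2(\|\nabla F(w_0)\|^2+G^2+\sigma_k^2)$, since $(1+1/E)^E\le e$.
\end{enumerate}
Substituting the drift bound back gives~\eqref{eq:lemma_variance} with an absolute $c_3$. One technical point: Assumption~\ref{assump:smooth} states smoothness only for $F$, so I would read it as also giving $L$-smooth $F_k$, as the paper implicitly does.

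For the progress bound~\eqref{eq:lemma:progress_bound}, take conditional expectations so the noise drops out: $\mathbb{E}\langle\nabla F(w_0),\Delta_k^{(r)}\rangle = -\eta\sum_j\mathbb{E}\langle \nabla F(w_0),\nabla F_k(w_j)\rangle$. Insert the same three-way decomposition:
\begin{enumerate}
\item The first piece gives exactly $-\eta E\|\nabla F(w_0)\|^2$.
\item The smoothness-drift piece is bounded by Young as $\tfrac{\eta E}{4}\|\nabla F(w_0)\|^2 + \eta L^2\sum_j D_j$. Using the drift bound and $\eta L E\le 1/4$, this is at most $\tfrac{\eta E}{4}\|\nabla F(w_0)\|^2 + c\,\eta^2E^2(\ldots)$. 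The $\|\nabla F(w_0)\|^2$ part of that residual is absorbed into the leading term, which leaves $c_1 = 1/2$ and a genuinely second-order $(G^2+\sigma_k^2)$ remainder.
\end{enumerate}

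\textbf{The main obstacle is the heterogeneity cross term} $-\eta E\langle\nabla F(w_0),\nabla F_k(w_0)-\nabla F(w_0)\rangle$. It is first order in $\eta E$, and per client it does not vanish. Young's inequality only yields $\tfrac{\eta E}{4}\|\nabla F\|^2 + \eta E\, G^2$, not $(\eta E)^2G^2$. Indeed, a single client whose gradient is anti-aligned with $\nabla F$ shows the per-client second-order claim cannot hold with absolute constants when $\|\nabla F(w_0)\|$ is small compared with $G$.

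My plan is therefore to prove~\eqref{eq:lemma:progress_bound} with this term carried explicitly: the right-hand side plus $B_k := -\eta E\langle\nabla F(w_0),\nabla F_k(w_0)-\nabla F(w_0)\rangle$. I would then observe that $\sum_k p_k B_k = 0$ when the $p_k$-weighted average is taken at a common $w_0$ with common $\eta E$. The inverse scaling makes $\eta_k^{(r)}E_k^{(r)}\approx\eta_{\text{base}}E_{\min}^{(r)}$ uniform across $k$, so this holds up to the floor error in $E_k^{(r)}$. That is the sense in which the $G^2$ contribution becomes second order in the downstream use.

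In the theorem, the aggregation weights are $p_k\phi(\tau)/S^{(r)}$ over $\mathcal{A}^{(r)}$, and the base points $w^{(R_k^{(r)})}$ differ across clients. So the residual of $\sum B_k$ must be charged to the staleness-drift term via Lemma~\ref{lem:staleness} and to the weight mismatch. I expect this last reconciliation, and not any SGD calculation, to be where the stated constant-free second-order form is hardest to justify.
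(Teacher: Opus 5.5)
Your second-moment argument for \eqref{eq:lemma_variance} is correct and follows the paper's outline: expand $\Delta_k^{(r)}=-\eta\sum_j g_j$, bound $\|\nabla F_k(w_j)\|^2$ by $\|\nabla F(w_0)\|^2$, $G^2$ and $L^2\|w_j-w_0\|^2$, then control the local drift under $\eta\le 1/(4LE)$. Your one-step recursion with martingale orthogonality actually proves the drift bound that the paper only asserts ``by induction.'' For this part you can avoid smoothness of each $F_k$, as the paper does, by using $\|\nabla F_k(w_j)\|\le\|\nabla F(w_j)\|+G$ and smoothness of $F$; only the paper's progress part needs $L$-smooth $F_k$.

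Your objection to \eqref{eq:lemma:progress_bound} is also correct, and it is exactly where the paper's proof fails. The paper uses $\langle\nabla F(w_0),\nabla F_k(w_e)\rangle\ge\frac12\|\nabla F(w_0)\|^2-G^2-L^2\|w_0-w_e\|^2$ and reaches
\[
-\tfrac{\eta E}{2}\|\nabla F(w_0)\|^2+\eta E\,G^2+C_2L^2\eta^3E^3\bigl(\|\nabla F(w_0)\|^2+G^2+\sigma_k^2\bigr).
\]
It then claims the form \eqref{eq:lemma:progress_bound}. That needs $\eta E\,G^2\le c_2\eta^2E^2G^2$, i.e.\ $\eta E\ge 1/c_2$, which is the opposite of the step-size condition.

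Your anti-aligned example makes this rigorous. Take two linear clients with $\nabla F_{1,2}\equiv g\mp G\,g/\|g\|$, $\|g\|=G/2$, $p_1=p_2=\frac12$, $\sigma_k=0$. Client~1's left side is $\eta E\,G^2/4$, while the right side is at most $c_2\eta^2E^2G^2$. So the inequality fails as $\eta\to 0$ for any constants, even $L$-dependent ones, since linear maps are $L$-smooth for every $L$. The statement as written is false, and no proof of it exists. What is true is your version with $B_k$ added. Even there, the remainder is $L^2\eta^3E^3(\cdots)\le\frac{L}{4}\eta^2E^2(\cdots)$, so $c_2$ must carry a factor $L$. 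The stated form is not invariant under $F\mapsto\lambda F$, $\eta\mapsto\eta/\lambda$, so an absolute $c_2$ is impossible.

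The remaining gap is your downstream rescue, and it cannot be closed in general. The cancellation $\sum_k p_kB_k=0$ needs weights exactly $p_k$ over all $K$ clients. \fedqueue{} instead uses $p_k\phi(\tau_k^{(r)})/S^{(r)}$ over the admitted set $\mathcal{A}^{(r)}$. Move every base point to $w^{(r)}$, paying for that with Lemma~\ref{lem:staleness}. The surviving term is $-\eta E\langle\nabla F(w^{(r)}),\nabla\tilde F^{(r)}(w^{(r)})-\nabla F(w^{(r)})\rangle$, where $\tilde F^{(r)}:=\sum_{k\in\mathcal{A}^{(r)}}p_k\phi(\tau_k^{(r)})F_k/S^{(r)}$. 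Here $\|\nabla\tilde F^{(r)}-\nabla F\|$ is of order $G$. This is a weight mismatch, not a drift, so no staleness bound absorbs it.

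A side correction: $\eta_k^{(r)}E_k^{(r)}=\eta_{\text{base}}E_{\min}^{(r)}$ holds exactly, not up to flooring. However, clients aggregated together were launched in different rounds $R_k^{(r)}$, so their products still differ.

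Concretely, take $F_1=\frac12\|w-a\|^2$, $F_2=\frac12\|w-b\|^2$, $p_1=p_2=\frac12$, equal budgets, and client~2 always exactly one round late ($\tau_{\max}=1$). For small $\eta_{\text{base}}$ the iterates converge to the minimizer of $F_1+\phi(1)F_2$. There $\|\nabla F\|=\frac{1-\phi(1)}{1+\phi(1)}G$, independent of $\eta_{\text{base}}$. So the reconciliation you flagged is impossible without an extra assumption, such as admission that is unbiased for $p_k$. The honest downstream result carries a $G^2$ bias that does not vanish as $\eta_{\text{base}}\to 0$.
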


\begin{proof}
Recall that facility $k$ performs local SGD starting from $w_k^{(r,0)} = w^{(R_k^{(r)})}$:
\[
w_k^{(r, e+1)} = w_k^{(r,e)} - \eta_k^{(r)} g_k^{(r,e)},
\]
where $g_k^{(r,e)}$ is a stochastic gradient sampled from $\mathcal{D}_k$. The local update is:
\[
\Delta_k^{(r)} = w_k^{(r, E_k^{(r)})} - w^{(R_k^{(r)})} = -\eta_k^{(r)} \sum_{e=0}^{E_k^{(r)}-1} g_k^{(r,e)}.
\]

Taking expectation:
\[
\mathbb{E}[\Delta_k^{(r)} \mid w^{(R_k^{(r)})}] = -\eta_k^{(r)} \sum_{e=0}^{E_k^{(r)}-1} \mathbb{E}[g_k^{(r,e)} \mid w^{(R_k^{(r)})}] = -\eta_k^{(r)} \sum_{e=0}^{E_k^{(r)}-1} \nabla F_k(w_k^{(r,e)}).
\]
We want to bound:
\begin{align}
\mathbb{E}[\langle \nabla F(w^{(R_k^{(r)})}), \Delta_k^{(r)} \rangle \mid w^{(R_k^{(r)})}] &= \left\langle \nabla F(w^{(R_k^{(r)})}), \mathbb{E}[\Delta_k^{(r)} \mid w^{(R_k^{(r)})}] \right\rangle \notag \\
&= -\eta_k^{(r)} \sum_{e=0}^{E_k^{(r)}-1} \mathbb{E}\left[\langle \nabla F(w^{(R_k^{(r)})}), \nabla F_k(w_k^{(r,e)}) \rangle \mid w^{(R_k^{(r)})}\right]. \label{eq:proof:progress}
\end{align}
For each term in the sum, we have
\begin{align*}
\langle \nabla F(w^{(R_k^{(r)})}), \nabla F_k(w_k^{(r,e)}) \rangle 
&\geq \frac{1}{2} \|\nabla F(w^{(R_k^{(r)})})\|^2 - \frac{1}{2} \|\nabla F(w^{(R_k^{(r)})}) - \nabla F_k(w_k^{(r,e)})\|^2 \\
&\geq \frac{1}{2} \|\nabla F(w^{(R_k^{(r)})})\|^2 \\
&\quad- \frac{1}{2} \left( 2\|\nabla F(w^{(R_k^{(r)})}) - \nabla F_k(w_k^{(R_k^{(r)})})\|^2 + 2\|\nabla F_k(w^{(R_k^{(r)})}) - \nabla F_k(w_k^{(r,e)})\|^2 \right) \\
&\geq \frac{1}{2} \|\nabla F(w^{(R_k^{(r)})})\|^2 - G^2 - L^2 \| w^{(R_k^{(r)})} - w_k^{(r,e)} \|^2
\end{align*}
From the local SGD update:
\[
\|w^{(R_k^{(r)})} - w_k^{(r,e)}\|^2 
= ( \eta_k^{(r)} )^2 \left\| \sum_{j=0}^{e-1} g_k^{(r,j)} \right\|^2
\leq ( \eta_k^{(r)} )^2 e \sum_{j=0}^{e-1} \|g_k^{(r,j)}\|^2,
\]
where the inequality uses Cauchy-Schwarz.
Taking expectations and using the variance bound:
\[
\mathbb{E} \left[ \| g_k^{(r,j)} \|^2 \mid w_k^{(R_k^{(r)})} \right]
\leq 2 \mathbb{E} \left[ \| \nabla F_k(w_k^{(r,j)}) \|^2 \mid w_k^{(R_k^{(r)})} \right] + 2 \sigma_k^2.
\]
Now we relate $\|\nabla F_k(w_k^{(r,j)})\|^2$ to $\nabla F(w_k^{(r,j)})$ using Assumption~\ref{assump:dissimilarity},
\[
\|\nabla F_k(w_k^{(r,j)})\| 
\leq \|\nabla F(w_k^{(r,j)})\| + \|\nabla F_k(w_k^{(r,j)}) - \nabla F(w_k^{(r,j)})\|
\leq \|\nabla F(w_k^{(r,j)})\| + G,
\]
so $\|\nabla F_k(w_k^{(r,j)})\| ^2 \leq \|\nabla F(w_k^{(r,j)})\|^2 + G^2$. By Assumption~\ref{assump:smooth},
\[
\|\nabla F(w_k^{(r,j)})\| ^2 \leq 2\|\nabla F(w_k^{(R_k^{(r)})})\|^2 + 2L^2 \|w_k^{(r,j)} - w_k^{(R_k^{(r)})}\|^2.
\]
Combining these gives
\[
\mathbb{E} \left[ \| g_k^{(r,j)} \|^2 \mid w_k^{(R_k^{(r)})} \right]
\leq C_0 \left( \|\nabla F(w_k^{(R_k^{(r)})})\|^2 + G^2 + \sigma_k^2 + L^2 \mathbb{E} \left[\|w_k^{(r,j)} - w_k^{(R_k^{(r)})}\|^2 \mid w_k^{(R_k^{(r)})} \right] \right),
\]
for a constant $C_0$. Plugging into the drift bound,
\begin{align*}
&\mathbb{E} \left[ \|w^{(R_k^{(r)})} - w_k^{(r,e)}\|^2 \mid w^{(R_k^{(r)})} \right] \\
&\leq C_0 ( \eta_k^{(r)} )^2 e \sum_{j=0}^{e-1} \left( \|\nabla F(w_k^{(R_k^{(r)})})\|^2 + G^2 + \sigma_k^2 + L^2 \mathbb{E} \left[\|w_k^{(r,j)} - w_k^{(R_k^{(r)})}\|^2 \mid w_k^{(R_k^{(r)})} \right] \right)
\end{align*}
Using the small step‑size condition $\eta \le 1/(4LE)$, one can show by induction on $e$ that the $L^2\mathbb{E}\|w_j-w_0\|^2$ part does not explode and can be absorbed into the constant (this is a standard argument in non‑convex SGD analyses (e.g., \citet{nguyen2022federated}). Concretely, there exists a constant $C_1>0$ such that
\[
\mathbb{E} \left[ \|w^{(R_k^{(r)})} - w_k^{(r,e)}\|^2 \mid w^{(R_k^{(r)})} \right]
\leq C_1 ( \eta_k^{(r)} )^2 e^2 \left( \|\nabla F(w_k^{(R_k^{(r)})})\|^2 + G^2 + \sigma_k^2 \right)
\]
Summing over $e$ gives
\begin{align*}
\sum_{e=0}^{E_k^{(r)}-1}\mathbb{E} \left[ \|w^{(R_k^{(r)})} - w_k^{(r,e)}\|^2 \mid w^{(R_k^{(r)})} \right]
&\leq C_1 ( \eta_k^{(r)} )^2 \left( \|\nabla F(w_k^{(R_k^{(r)})})\|^2 + G^2 + \sigma_k^2 \right)\sum_{e=0}^{E_k^{(r)}-1}e^2  \\
&\leq C_2 ( \eta_k^{(r)} )^2 (E_k^{(r)})^3 \left( \|\nabla F(w_k^{(R_k^{(r)})})\|^2 + G^2 + \sigma_k^2 \right),
\end{align*}
where the last inequality uses $\sum_{e=0}^{E-1} e^2 \le E^3/3$.
Substituting the bounds to \eqref{eq:proof:progress},
\begin{align*}
\mathbb{E}[\langle \nabla F(w^{(R_k^{(r)})}), \Delta_k^{(r)} \rangle \mid w^{(R_k^{(r)})}] 
&\leq -\frac{\eta_k^{(r)} E_k^{(r)}}{2} \|\nabla F(w^{(R_k^{(r)})})\|^2 + \eta_k^{(r)} E_k^{(r)} G^2 \\
&\quad + \eta_k^{(r)} L^2 \sum_{e=0}^{E_k^{(r)}-1} \mathbb{E} \left[ \| w^{(R_k^{(r)})} - w_k^{(r,e)} \|^2 \mid w^{(R_k^{(r)})} \right] \\
&\leq -\frac{\eta_k^{(r)} E_k^{(r)}}{2} \|\nabla F(w^{(R_k^{(r)})})\|^2 + \eta_k^{(r)} E_k^{(r)} G^2 \\
&\quad + C_2 L^2 ( \eta_k^{(r)} )^3 (E_k^{(r)})^3 \left( \|\nabla F(w_k^{(R_k^{(r)})})\|^2 + G^2 + \sigma_k^2 \right).
\end{align*}
Under the step-size constraint $\eta_k^{(r)} \leq\frac{1}{4LE_k^{(r)}}$, the last term can be bounded as $C_2 L^2 ( \eta_k^{(r)} )^3 (E_k^{(r)})^3 \leq \frac{C_2}{64L}$, so for some constant $c_1\in(0,0.5)$ and $c_2>0$, we obtain the result \eqref{eq:lemma:progress_bound}.

To derive the variance bound:
\begin{align*}
\mathbb{E}\left[\|\Delta_k^{(r)}\|^2 \mid w^{(R_k^{(r)})} \right] 
= (\eta_k^{(r)})^2 \mathbb{E} \left[ \left\| \sum_{e=0}^{E_k^{(r)}-1} g_k^{(r,e)} \right\|^2 \mid w^{(R_k^{(r)})} \right] 
\leq (\eta_k^{(r)})^2 E_k^{(r)} \sum_{e=0}^{E_k^{(r)}-1} \mathbb{E} \left[ \left\| g_k^{(r,e)} \right\|^2 \mid w^{(R_k^{(r)})} \right],
\end{align*}
where the inequality holds by Cauchy-Schwarz.
Similarly, 
\begin{align*}
\mathbb{E}\left[\|\Delta_k^{(r)}\|^2 \mid w^{(R_k^{(r)})} \right] 
&\leq (\eta_k^{(r)})^2 E_k^{(r)} \sum_{e=0}^{E_k^{(r)}-1} \mathbb{E} \left[ \left\| g_k^{(r,e)} \right\|^2 \mid w^{(R_k^{(r)})} \right] \\
&\leq c_3 ( \eta_k^{(r)} )^2 (E_k^{(r)})^2 \left( \|\nabla F(w_k^{(R_k^{(r)})})\|^2 + G^2 + \sigma_k^2 \right).
\end{align*}
This completes the proof.
\end{proof}

\subsection{Aggregate Update Norm}

\begin{lemma}[Aggregate norm and weights]
\label{lem:aggregate}
For any round $r$, 
\begin{equation}
\|\Delta^{(r)}\|^2 \leq \frac{1}{S^{(r)}} \sum_{k \in \mathcal{A}^{(r)}} p_k \phi(\tau_k^{(r)}) \|\Delta_k^{(r)}\|^2.
\label{eq:aggregate_bound}
\end{equation}
\end{lemma}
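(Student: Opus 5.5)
The plan is to read $\Delta^{(r)}$ as a convex combination of the admitted client updates and then apply Jensen's inequality to the convex function $\|\cdot\|^2$.

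First, define normalized weights $\lambda_k := p_k\phi(\tau_k^{(r)})/S^{(r)}$ for $k\in\mathcal{A}^{(r)}$. These are well defined whenever $\mathcal{A}^{(r)}$ is nonempty, because $p_k\ge 0$ and $\phi>0$ give $S^{(r)}>0$; this assumes at least one admitted client has $p_k>0$. If no update is admitted in a round, the model is unchanged, $\Delta^{(r)}=0$, and the inequality holds trivially if we adopt the convention that the empty sum is zero.

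Next, I will check that $\lambda_k\ge 0$ and $\sum_{k\in\mathcal{A}^{(r)}}\lambda_k = 1$. The latter holds by the definition of $S^{(r)}$ as exactly the sum of the unnormalized weights. The aggregation rule in Algorithm~\ref{alg:server} (equivalently, the definition of $\Delta^{(r)}$ at the start of this appendix) then reads $\Delta^{(r)}=\sum_{k}\lambda_k\Delta_k^{(r)}$, which is a convex combination.

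Then I will apply Jensen's inequality to the squared norm, giving $\|\sum_k\lambda_k\Delta_k^{(r)}\|^2\le\sum_k\lambda_k\|\Delta_k^{(r)}\|^2$. Substituting back the definition of $\lambda_k$ yields exactly \eqref{eq:aggregate_bound}. If one prefers to avoid Jensen, the same bound follows from the triangle inequality followed by Cauchy--Schwarz with weights:
\[
\Bigl(\sum_k\lambda_k\|\Delta_k^{(r)}\|\Bigr)^2=\Bigl(\sum_k\sqrt{\lambda_k}\cdot\sqrt{\lambda_k}\|\Delta_k^{(r)}\|\Bigr)^2\le\Bigl(\sum_k\lambda_k\Bigr)\Bigl(\sum_k\lambda_k\|\Delta_k^{(r)}\|^2\Bigr).
\]

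There is no real obstacle here. The only point needing care is the bookkeeping for when the normalization is valid: $S^{(r)}>0$, or the empty-admission case handled separately. The argument is deterministic and uses none of the assumptions, so it can be applied pathwise before taking expectations in the main theorem.
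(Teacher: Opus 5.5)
Your proposal is correct and is essentially the paper's argument: the paper applies the weighted Cauchy--Schwarz inequality $\|\sum_k w_k\Delta_k^{(r)}\|^2\le(\sum_k w_k)(\sum_k w_k\|\Delta_k^{(r)}\|^2)$ with $w_k=p_k\phi(\tau_k^{(r)})$ and divides by $(S^{(r)})^2$, which is the same step as your Jensen inequality with normalized weights $\lambda_k$. Your remarks on ensuring $S^{(r)}>0$ and on the empty-admission round are bookkeeping that the paper omits.
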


\begin{proof}
The global update is bounded as follows:
\begin{align*}
\|\Delta^{(r)}\|^2 &= \left\|\frac{1}{S^{(r)}} \sum_{k \in \mathcal{A}^{(r)}} p_k \phi(\tau_k^{(r)}) \Delta_k^{(r)}\right\|^2 \\
&\leq \frac{1}{(S^{(r)})^2} \left(\sum_{k \in \mathcal{A}^{(r)}} p_k \phi(\tau_k^{(r)})\right) \left(\sum_{k \in \mathcal{A}^{(r)}} p_k \phi(\tau_k^{(r)}) \|\Delta_k^{(r)}\|^2\right) \nonumber
= \frac{1}{S^{(r)}} \sum_{k \in \mathcal{A}^{(r)}} p_k \phi(\tau_k^{(r)}) \|\Delta_k^{(r)}\|^2,
\end{align*}
where the inequality holds due to Cauchy-Schwarz.
\end{proof}

\subsection{Telescoping Sum and Final Convergence Rate}
\label{sec:telescope}

We now combine Lemmas \ref{lem:descent}--\ref{lem:aggregate} to prove Theorem~\ref{thm:main}.
We take expectations in Lemma~\ref{lem:descent} (conditional on the history up to round $r$):
\begin{align}
\mathbb{E}[F(w^{(r+1)})] &\leq \mathbb{E}[F(w^{(r)})] \notag \\
&\quad + \frac{1}{S^{(r)}} \sum_{k \in \mathcal{A}^{(r)}} p_k \phi(\tau_k^{(r)}) \mathbb{E}[\langle \nabla F(w^{(R_k^{(r)})}), \Delta_k^{(r)} \rangle] \tag{\text{Progress}}\\
&\quad + \frac{1}{S^{(r)}} \sum_{k \in \mathcal{A}^{(r)}} p_k \phi(\tau_k^{(r)}) \frac{L^2}{2} \mathbb{E}[\|d_k^{(r)}\|^2] \tag{\text{Staleness}}\\
&\quad + \frac{1}{S^{(r)}} \sum_{k \in \mathcal{A}^{(r)}} p_k \phi(\tau_k^{(r)}) \frac{1}{2} \mathbb{E}[\|\Delta_k^{(r)}\|^2] \tag{\text{Local Variance}} \\
&\quad+ \frac{L}{2} \mathbb{E}[\|\Delta^{(r)}\|^2]. \tag{\text{Global Variance}}
\end{align}
By Lemma~\ref{lem:variance},
\begin{align*}
\mathbb{E}[\langle \nabla F(w^{(R_k^{(r)})}), \Delta_k^{(r)} \mid w^{(R_k^{(r)})}\rangle]
&\leq - c_1 \eta_k^{(r)} E_k^{(r)} \| \nabla F(w^{(R_k^{(r)})}) \|^2 + c_2 (\eta_k^{(r)})^2 (E_k^{(r)})^2 (G^2 + \sigma_k^2) \\
&\leq - c_1 \eta_\text{base} E_{\min} \| \nabla F(w^{(R_k^{(r)})}) \|^2 + c_2 \eta_\text{base}^2 E_{\min}^2 (G^2 + \sigma^2)
\end{align*}
Substituting this into the Progress term gives:
\begin{align*}
\text{(Progress)} 
\leq - \frac{c_1\eta_\text{base}E_{\min}}{S^{(r)}} \sum_{k \in \mathcal{A}^{(r)}} p_k \phi(\tau_k^{(r)}) \mathbb{E}[\|\nabla F(w^{(R_k^{(r)})})\|^2]
+ c_2\eta_\text{base}^2 E_{\min}^2 (G^2 + \sigma^2)
\end{align*}
To relate $\|\nabla F(w^{(R_k^{(r)})})\|^2$ to $\|\nabla F(w^{(r)})\|^2$, we use Assumption~\ref{assump:smooth}:
\[
\|\nabla F(w^{(R_k^{(r)})}) - \nabla F(w^{(r)})\| \leq L \|d_k^{(r)}\|,
\]
which implies
\[
\|\nabla F(w^{(R_k^{(r)})})\|^2 \geq \frac{1}{2}\|\nabla F(w^{(r)})\|^2 - L^2 \|d_k^{(r)}\|^2.
\]
Combining these, we get:
\begin{align*}
\text{(Progress)} 
&\leq - \frac{c_1\eta_\text{base}E_{\min}}{S^{(r)}} \sum_{k \in \mathcal{A}^{(r)}} p_k \phi(\tau_k^{(r)}) \left( \frac{1}{2} \mathbb{E}[\|\nabla F(w^{(r)})\|^2] - L^2 \mathbb{E}[\| d_k^{(r)} \|^2] \right)
+ c_2\eta_\text{base}E_{\max} (G^2 + \sigma^2) \\
&\leq - \frac{c_1\eta_\text{base}E_{\min}}{2} \mathbb{E}[\|\nabla F(w^{(r)})\|^2] + \frac{c_1\eta_\text{base}E_{\min}L^2}{S^{(r)}} \sum_{k \in \mathcal{A}^{(r)}} p_k \phi(\tau_k^{(r)}) \mathbb{E}[\| d_k^{(r)} \|^2] \\
&\quad + c_2\eta_\text{base}E_{\max} (G^2 + \sigma^2).
\end{align*}

We next bound the Local Variance and Global Variance terms using Lemmas~\ref{lem:variance} and \ref{lem:aggregate}.
Lemma~\ref{lem:variance} gives:
\[
\mathbb{E}[\| \Delta_k^{(r)} \|^2 \mid w^{(R_k^{(r)})}]
\leq c_3 \eta_\text{base}^2 E_{\min}^2 \left( \mathbb{E}[\| \nabla F(w^{(R_k^{(r)})}) \|^2] + G^2 + \sigma^2 \right)
\]
Using Lemma~\ref{lem:aggregate}, we have
\begin{align*}
\text{(Global Variance)}
&\leq \frac{L}{2S^{(r)}} \sum_{k \in \mathcal{A}^{(r)}} p_k \phi(\tau_k^{(r)}) \mathbb{E}[\|\Delta_k^{(r)}\|^2] \\
&\leq \frac{Lc_3 \eta_\text{base}^2 E_{\min}^2}{2S^{(r)}} \sum_{k \in \mathcal{A}^{(r)}} p_k \phi(\tau_k^{(r)}) \left( \mathbb{E}[\| \nabla F(w^{(R_k^{(r)})}) \|^2] + G^2 + \sigma^2 \right).
\end{align*}
Similarly, the Local Variance term is
\begin{align*}
\text{(Local Variance)}
\leq \frac{c_3 \eta_\text{base}^2 E_{\min}^2}{2S^{(r)}} \sum_{k \in \mathcal{A}^{(r)}} p_k \phi(\tau_k^{(r)}) \left( \mathbb{E}[\| \nabla F(w^{(R_k^{(r)})}) \|^2] + G^2 + \sigma^2 \right).
\end{align*}
Combining these two terms and absorbing constants into $c_3>0$, we obtain
\begin{align}
\text{(Global Variance)}+\text{(Local Variance)}
&\leq \frac{c_3 \eta_\text{base}^2 E_{\min}^2}{S^{(r)}} \sum_{k \in \mathcal{A}^{(r)}} p_k \phi(\tau_k^{(r)}) \left( \mathbb{E}[\| \nabla F(w^{(R_k^{(r)})}) \|^2] + G^2 + \sigma^2 \right). \label{eq:global_local_vars}
\end{align}

Using the smoothness relation above:
\[
\|\nabla F(w^{(R_k^{(r)})})\|^2 
\leq 2\|\nabla F(w^{(r)})\|^2 + 2 L^2 \|d_k^{(r)}\|^2,
\]
we bound the gradient term in \eqref{eq:global_local_vars} as
\begin{align*}
\frac{1}{S^{(r)}}\sum_{k \in \mathcal{A}^{(r)}} p_k \phi(\tau_k^{(r)}) \mathbb{E}[\| \nabla F(w^{(R_k^{(r)})}) \|^2]
\leq 2\mathbb{E}[\|\nabla F(w^{(r)}) \|^2] + \frac{2L^2}{S^{(r)}}\sum_{k \in \mathcal{A}^{(r)}} p_k \phi(\tau_k^{(r)}) \mathbb{E}[\|d_k^{(r)}\|^2].
\end{align*}
Substituting this to \eqref{eq:global_local_vars} and absorbing constants, we have
\begin{align*}
\text{(Global Variance)}+\text{(Local Variance)}
&\leq c_4 \eta_\text{base}^2 E_{\min}^2 \mathbb{E}[\|\nabla F(w^{(r)}) \|^2]
+ \frac{c_5 \eta_\text{base}^2 E_{\min}^2L^2}{S^{(r)}}\sum_{k \in \mathcal{A}^{(r)}} p_k \phi(\tau_k^{(r)}) \mathbb{E}[\|d_k^{(r)}\|^2] \\
&\quad+ c_6 \eta_\text{base}^2 E_{\min}^2 \left( G^2 + \sigma^2 \right),
\end{align*}
for some constants $c_4,c_5,c_6>0$.

Combining (Progress), (Staleness), (Local Variance), and (Global Variance), we see that all drift-related contributions appear with a common structure
\[
D_r := \frac{1}{S^{(r)}} \sum_{k\in\mathcal{A}^{(r)}} p_k \phi(\tau_k^{(r)}) \mathbb{E}[\|d_k^{(r)}\|^2],
\]
multiplied by coefficients of order $\eta_\text{base} E_{\min} L^2$ and $\eta_\text{base}^2 E_{\min}^2 L^2$.

By Lemma~\ref{lem:staleness}, the Staleness term is
\begin{align*}
D_r
&\leq \frac{1}{S^{(r)}} \sum_{k\in\mathcal{A}^{(r)}} p_k \phi(\tau_k^{(r)}) \tau_{\max} \sum_{t=0}^{r-1} \mathbb{E}[\|\Delta^{(t)}\|^2]
= \tau_{\max} \sum_{t=0}^{r-1} \mathbb{E}[\|\Delta^{(t)}\|^2]
\end{align*}
Lemmas~\ref{lem:variance} and \ref{lem:aggregate} imply that each $\mathbb{E}[\|\Delta^{(t)}\|^2]$ is of order
\begin{align}
\mathbb{E}[\|\Delta^{(t)}\|^2]
&\leq c_3 \eta_\text{base}^2 E_{\min}^2 \left( \frac{1}{S^{(t)}}\sum_{k\in\mathcal{A}^{(t)}} p_k \phi(\tau_k^{(t)}) \mathbb{E}[\| \nabla F(w^{(R_k^{(t)})}) \|^2] + G^2 + \sigma^2 \right) \notag \\
&\leq c_3 \eta_\text{base}^2 E_{\min}^2 \left( \frac{1}{S^{(t)}}\sum_{k\in\mathcal{A}^{(t)}} p_k \phi(\tau_k^{(t)}) \left( 2\mathbb{E}[ \| \nabla F(w^{(t)}) \|^2] + 2L^2\|d_k^{(t)}\|^2 \right) + G^2 + \sigma^2 \right) \notag \\
&\leq c_3 \eta_\text{base}^2 E_{\min}^2 \left( 2\mathbb{E}[ \| \nabla F(w^{(t)}) \|^2] + G^2 + \sigma^2 \right) + \frac{2c_3 \eta_\text{base}^2 E_{\min}^2L^2}{S^{(t)}}\sum_{k\in\mathcal{A}^{(t)}} p_k \phi(\tau_k^{(t)}) \|d_k^{(t)}\|^2  \label{eq:global_update_bound}
\end{align}

Define $a_s := \mathbb{E}[\|\Delta^{(s)}\|^2]$. Then we can write \eqref{eq:global_update_bound} as the recursive inequality
\begin{align*}
a_t 
&\leq \alpha \mathbb{E}[\|\nabla F(w^{(t)})\|^2] + \beta(G^2 + \sigma^2) + \gamma \sum_{s=0}^{t-1} a_s \\
&\leq \alpha \mathbb{E}[\|\nabla F(w^{(t)})\|^2] + \beta(G^2 + \sigma^2) + \gamma t \max_{0\leq s\leq t} a_s,
\end{align*}
where $\alpha = 2c_3 \eta_\text{base}^2 E_{\min}^2$, $\beta=c_3 \eta_\text{base}^2 E_{\min}^2$, and $\gamma=2c_3 \eta_\text{base}^2 E_{\min}^2 L^2 \tau_{\max}$. Let $M_t := \max_{0\leq s\leq t} a_s$. Taking maximum on the left:
\[
M_t \leq \alpha \max_{s\leq t} \mathbb{E}[\|\nabla F(w^{(s)})\|^2] + \beta(G^2+\sigma^2) + \gamma R M_{t-1},
\]
since $t\leq R$. Rearranging the RHS term with $M_{t-1}$:
\begin{align*}
M_t 
&\leq \frac{1}{1-\gamma R} \left( \alpha \max_{s\leq t} \mathbb{E}[\|\nabla F(w^{(s)})\|^2] + \beta(G^2+\sigma^2) \right)
\end{align*}
For small enough $\eta_\text{base}$, the factor $\frac{1}{1-\gamma R}$ is a finite constant that depends only on $(L,\tau_{\max},R)$ and is independent of $k,t$. Absorb this into a new constant $C_7$, and focus back on $a_t\leq M_t$:
\[
a_t = \mathbb{E}[\|\Delta^{(t)}\|^2]
\leq c_7 \eta_\text{base}^2 E_{\min}^2 \left( \mathbb{E}[\|\nabla F(w^{(t)})\|^2] + L^2 \tau_{\max}^2 + G^2+\sigma^2 \right),
\]
where
\[
\max_{s\leq t} \|\nabla F(w^{(s)})\|^2 \leq 2\|\nabla F(w^{(t)})\|^2 + 2L^2 \tau_{\max}^2.
\]

We would like to note that all quantities here---$L$, $\tau_{\max} = \lceil 1+\gamma \rceil$, $G$, and $\sigma$---are dimensionless real scalars, so $L^2\tau_{\max}^2$, $G^2$, and $\sigma^2$ are non-negative scalars of the same type and may be summed directly. The $L^2\tau_{\max}^2$ term acts as a correction bounding gradient drift over the staleness window; a similar penalty structure appears in the asynchronous FL convergence bounds of FedBuff~\cite{nguyen2022federated}.

Then, the drift-related term can be bounded as
\begin{align*}
D_r 
\leq \tau_{\max}\sum_{t=0}^{r-1} \mathbb{E}[\|\Delta^{(t)}\|^2]
\leq c_7 \tau_{\max} \eta_\text{base}^2 E_{\min}^2 \left( \sum_{t=0}^{r-1} \mathbb{E}[\|\nabla F(w^{(t)})\|^2] + rL^2 \tau_{\max}^2 + rG^2 + r\sigma^2 \right).
\end{align*}

Plugging (Progress), (Staleness), (Local Variance), and (Global Variance), we have:
\begin{align*}
\mathbb{E}[F(w^{(r+1)})] - \mathbb{E}[F(w^{(r)})] 
&\leq \left( -\frac{c_1 \eta_\text{base} E_{\min}}{2} + c_4 \eta_\text{base}^2 E_{\min}^2 \right) \mathbb{E}[\|\nabla F(w^{(r)})\|^2] \\
&\quad + (c_2 \eta_\text{base} E_{\min} + c_6 \eta_\text{base}^2 E_{\min}^2) (G^2 + \sigma^2) \\
&\quad + \left(c_1 \eta_\text{base} E_{\min}
+ c_5 \eta_\text{base}^2 E_{\min}^2
+ \frac{1}{2} \right) L^2 D_r \\
&\leq \left( -\frac{c_1 \eta_\text{base} E_{\min}}{2} + c_4 \eta_\text{base}^2 E_{\min}^2 + A \right) \mathbb{E}[\|\nabla F(w^{(r)})\|^2] \\
&\quad + (c_2 \eta_\text{base} E_{\min} + c_6 \eta_\text{base}^2 E_{\min}^2 + Ar) (G^2 + \sigma^2) \\
&\quad + ArL^2\tau_{\max}^2,
\end{align*}
where
\[
A = c_7 \tau_{\max} \eta_\text{base}^2 E_{\min}^2 L^2 \left(c_1 \eta_\text{base} E_{\min} + c_5 \eta_\text{base}^2 E_{\min}^2 + \frac{1}{2} \right)
\]
Summing over $r=0,\dots,R-1$, and telescoping the left-hand side gives:
\begin{align*}
\mathbb{E}[F(w^{(R)})] - \mathbb{E}[F(w^{(0)})] 
&\leq \left( -\frac{c_1 \eta_\text{base} E_{\min}}{2} + c_4 \eta_\text{base}^2 E_{\min}^2\right) \sum_{r=0}^{R-1} \mathbb{E}[\|\nabla F(w^{(r)})\|^2] + A \sum_{r=0}^{R-1} \sum_{t=0}^{r-1} \mathbb{E}[\|\nabla F(w^{(t)})\|^2] \\
&\quad + R \left(c_2 \eta_\text{base} E_{\min} + c_6 \eta_\text{base}^2 E_{\min}^2 + A \sum_{r=0}^{R-1} r \right) (G^2 + \sigma^2) \\
&\quad + AL^2\tau_{\max}^2 \sum_{r=0}^{R-1}r \\
&\leq \left( -\frac{c_1 \eta_\text{base} E_{\min}}{2} + c_4 \eta_\text{base}^2 E_{\min}^2 + A R\right) \sum_{r=0}^{R-1} \mathbb{E}[\|\nabla F(w^{(r)})\|^2] \\
&\quad + R \left(c_2 \eta_\text{base} E_{\min} + c_6 \eta_\text{base}^2 E_{\min}^2 + A R^2 \right) (G^2 + \sigma^2) \\
&\quad + AL^2\tau_{\max}^2 R^2 \\
&\leq -c_8 \eta_\text{base} E_{\min} \sum_{r=0}^{R-1} \mathbb{E}[\|\nabla F(w^{(r)})\|^2] + c_9 R \eta_\text{base}^2 E_{\min}^2 (L^2 \tau_{\max}^2 + G^2 + \sigma^2)
\end{align*}
where there exists a step-size constant small enough to guarantee descent condition.

Using $F(w^{(R)})\geq F^*$, we rearrange
\begin{align*}
\frac{1}{R} \sum_{r=0}^{R-1} \mathbb{E}[\|\nabla F(w^{(r)})\|^2] 
&\leq C_0 \frac{\mathbb{E}[F(w^{(0)})] - F^*}{\eta_\text{base} E_{\min}R} 
+ C_1 \eta_\text{base} E_{\min} (L^2 \tau_{\max}^2 + G^2 + \sigma^2),
\end{align*}
where $C_0 = \frac{1}{c_8}$ and $C_1 = \frac{c_9}{c_8}$. \qed

\section{Sensitivity of Staleness Decay Functions}
\label{app:sensitivity_decay}

We sweep the decay parameter $\beta \in \{0.25, 0.5, 1.0, 2.0\}$ under both harmonic $\phi(\tau)=(1+\beta\tau)^{-1}$ and exponential $\phi(\tau)=\exp(-\beta\tau)$ decay, holding all other settings fixed. Table~\ref{tab:staleness_weight_sweep} reports the average normalized weight assigned to late updates, alongside maximum accuracy and time to $90\%$ accuracy. Smaller $\beta$ values preserve more stale-update information, while larger $\beta$ values penalize delayed updates more strongly. Among the tested settings, the harmonic staleness function with $\beta=0.5$ provides the best trade-off: it shows a competitive peak accuracy (only lagging by 0.23\% to the $\beta=0.25$ setting) among the stable configurations and achieves the fastest time to $90\%$ accuracy.

\begin{table}[!ht]
\centering
\caption{Sensitivity to the staleness weighting function and decay parameter $\beta$ on non-IID MNIST (Dirichlet $\alpha_{\mathrm{dir}}=0.5$). Average normalized staleness weight measures the effective contribution assigned to late updates.}
\label{tab:staleness_weight_sweep}
\small
\begin{tabular}{lcccccc}
\toprule
& \multicolumn{3}{c}{\textit{Harmonic decay}} & \multicolumn{3}{c}{\textit{Exponential decay}} \\
\cmidrule(lr){2-4} \cmidrule(lr){5-7}
$\beta$ & Avg. Weight & Max-A$\uparrow$ & Time-to-$A^\star$ $\downarrow$ & Avg. Weight & Max-A$\uparrow$ & Time-to-$A^\star$ $\downarrow$\\
\midrule
0.25 & 0.177 & \textbf{96.27} & 73.26 & 0.173 & \textbf{95.11} & 72.19 \\
0.50 & 0.152 & 96.04 & \textbf{67.98} & 0.140 & 94.91 & \textbf{70.60} \\
1.00 & 0.118 & 94.91 & 75.04 & 0.090 & 94.74 & 76.17 \\
2.00 & 0.082 & 92.45 & 81.18 & 0.035 & 92.08 & 89.65  \\
\bottomrule
\end{tabular}
\end{table}

\section{Additional Details for Large-Scale Experiments}
\label{app:large-plots}

This appendix complements Section~\ref{sec:llama} by providing additional details on the experimental setup and per facility-level diagnostics for the cross-facility LLaMA2-7B experiment. 

\subsection{Experimental Setup}

We configure two \fedqueue{} runs (\fedqueue{}-1 and \fedqueue{}-2) with synchronization horizon and safety buffer set to $(T_{\mathrm{sync}}, \delta) = (20, 1)$ minutes and $(T_{\mathrm{sync}}, \delta) = (40, 2)$ minutes respectively. The rest of the parameters are identical across the two \fedqueue{} runs. All methods use identical per-round resource requests and training hyperparameters. The configurations for the real-world cross-facility deployment and \fedqueue{}-specific parameters are presented in Table~\ref{tab:large-exp-params}.

\begin{table}[!ht]
  \centering
  \scriptsize
  \caption{\textbf{Training protocols.}}
  \label{tab:large-exp-params}
  \vspace{-0.25em}
  \adjustbox{max width = 0.99\columnwidth}{\begin{tabular}{l l}
    \toprule
    Configuration & Value \\
    \midrule
    Model / tuning & LLaMA2-7B / full fine-tuning \\
    Sequence length &  512 \\
    Dataset & Partition by task categories (non-IID)  \\
    Optimizer / LR & AdamW / fixed and fine-tuned LR \\
    Evaluation & After each aggregation \\
    Wall-clock budget & 17,000 seconds \\
    \midrule
    \multicolumn{2}{l}{\textit{\fedqueue{}-specific parameters:}} \\
    $T_{\mathrm{sync}}$ &  20 min for \fedqueue{}-1; 40 min for \fedqueue{}-2 \\
    Safety buffer $\delta$ &  1 min for \fedqueue{}-1; 2 min for \fedqueue{}-2 \\
    EWMA rate $\alpha$ &  0.5 \\
    Staleness weight $\phi(\tau)$ &  harmonic with $\beta = 0.5$ \\
    Initial budget $E_k^{(1)}$ & 20 training steps \\
    \bottomrule
  \end{tabular}}
  \vspace{-0.75em}
\end{table}

\subsection{Per Facility-level Diagnostics}
While the main text reports consolidated time-to-quality metrics (Figure~\ref{fig:llama_ttq} and Table~\ref{tab:time-to-threshold}), Figure~\ref{fig:algo_comparison_detail} presents detailed loss trajectories for each method, showing both individual facility and global model performances.

Several patterns emerge from the facility-level view. FedAvg maintains relatively synchronized updates across facilities, but this synchronization comes at the cost of slower overall progress due to stragglers. FedAsync shows more rapid early progress, but ultimately plateaus at a higher final loss, suggesting that uncontrolled asynchrony may not be ideally in this real-world deployment case. In addition, FedCompass achieves competitive early performance as well through static queue profiling, but its reliance on fixed estimates causes performance degradation as queue conditions shift over the 5-hour experimental window. Finally, both \fedqueue{} configurations demonstrate more balanced facility utilization while achieving superior final global models, with \fedqueue{}-2's longer synchronization horizon (40 minutes) enabling better adaptation to queue variability and ultimately reaching the lowest final loss.

Beyond the loss trajectories, we also report the realized range of local-step budgets across rounds and facilities. The derived budgets $E_k^{(r)} = \left \lfloor c_k J_k^{(r)} \right \rfloor$ ranged from $E_{\min}=20$ to $E_{\max}=72$ steps under \fedqueue{}-1 and from $E_{\min}=20$ to $E_{\max}=176$ steps under \fedqueue{}-2. The $E_{\min}=20$ step floor is conservative with respect to Theorem~\ref{thm:main} where larger $E_{\min}$ tightens the optimization error term while leaving the bias term unchanged.

\begin{figure}[t]
    \centering
    \includegraphics[width=1.0\linewidth]{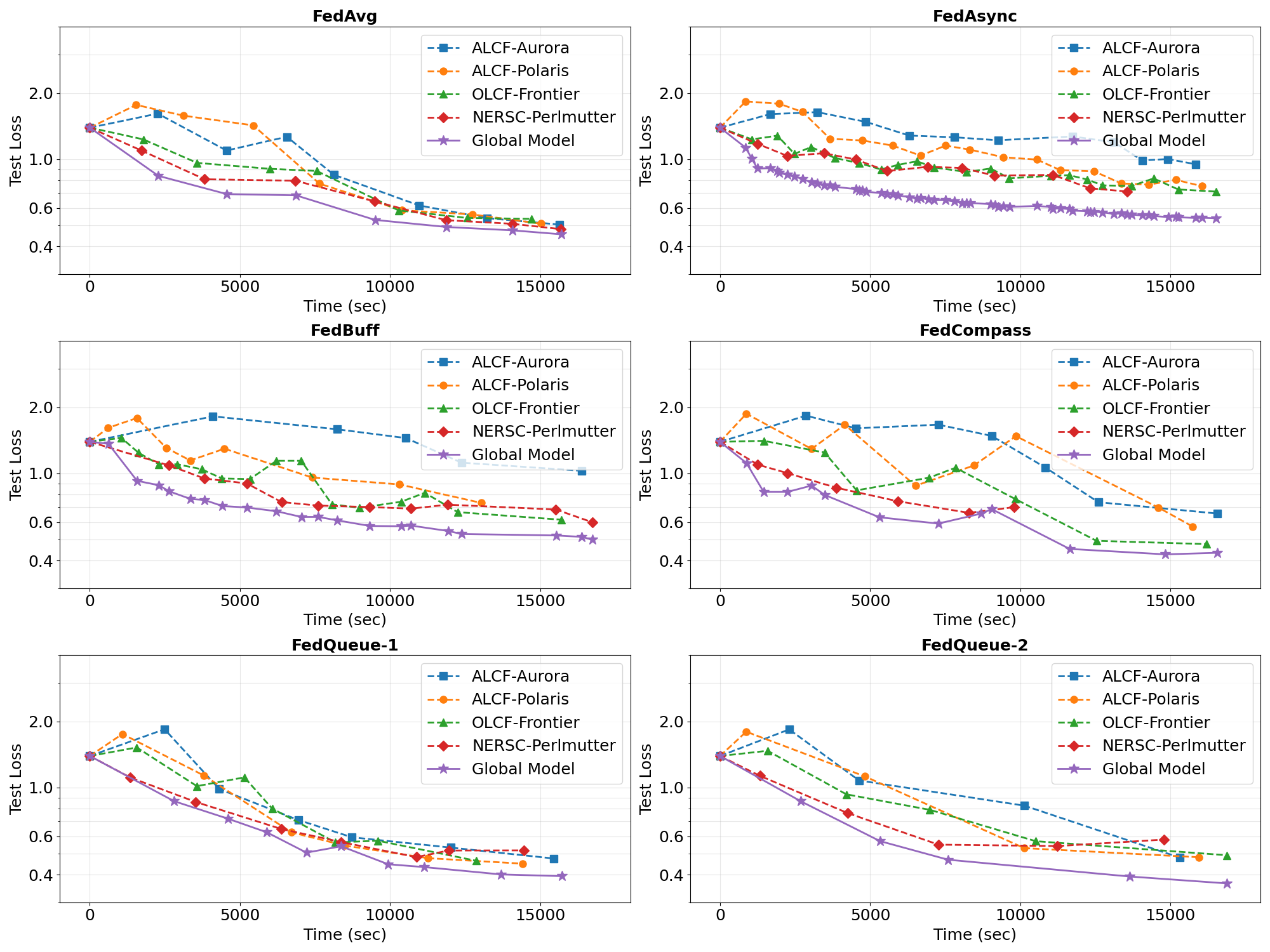}
    \caption{\textbf{Facility-level loss trajectories for all methods.} Each panel shows test loss over time for individual facilities (colored dashed lines) and the federated global model (purple solid line). \fedqueue{} configurations achieve better balance between facility utilization and global model quality.}
    \label{fig:algo_comparison_detail}
\end{figure}

\section{Controlled Experiments and Extended Results}
\label{app:controlled}

This appendix provides (i) the complete controlled-experiment configuration and (ii) extended results that validate Lemma~\ref{lem:bounded-staleness-queue} under synthetic queue dynamics.
The main text in Section~\ref{sec:controlled} reports a compact subset (time-to-quality, expected delay, max delay, Data movement ratio, total local steps); here we include the full sweeps over queue variability, admission tolerance, predictor quality, and safety-buffer stress tests, plus ablations.

\subsection{Experimental Overview}
\label{app:exp_overview}

We use synthetic queue processes which let us vary queue variance, prediction quality, and admission thresholds in a reproducible way, directly probing the assumptions behind Lemma~\ref{lem:bounded-staleness-queue}.

Table~\ref{tab:controlled_config_and_hparams} records the dataset/model/optimization and evaluation configuration used for all controlled experiments. Unless stated otherwise, all methods share the same learning workload, compute settings, and evaluation protocol; differences are restricted to orchestration, budgeting, admission, and aggregation.

\textbf{Baselines.} We compare \fedqueue{} against:
\begin{itemize}
    \item \textbf{FedAvg}~\citep{mcmahan2017communication}: synchronous FL with fixed local work per round and blocking aggregation.
    \item \textbf{FedAsync}~\citep{xie2019asynchronous}: fully asynchronous aggregation with a staleness weight $w(\tau)=(1+\tau)^{-1}$.
    \item \textbf{FedBuff}~\citep{nguyen2022federated}: buffered asynchronous FL with a fixed buffer size.
    \item \textbf{FedCompass}~\citep{li2024fedcompass}: a system-aware baseline that allocates local work using static throughput profiling without adapting to queue dynamics.
\end{itemize}

\textbf{Metrics.} We report:
\begin{itemize}
  \item \textbf{Time-to-quality:} training loss / validation metric versus elapsed time to reach a target metric.
    \item \textbf{Empirical probability of client arrival beyond the cutoff:} $\mathbb{P}\!\left(t_k^{r} > T_{\mathrm{sync}},\, \forall (k,\, r\right))$
    \item \textbf{Normalized expectation of delay:} $\hat{\mathbb{E}}_{d}$= $\mathbb{E}\!\left[\left.\frac{t_k}{T_{\mathrm{sync}}}\,\right|\, t_k > T_{\mathrm{sync}}\right]$, 
    \item \textbf{Normalized maximum delay:} $R_{d}$=$\max_{k,r}\left(\frac{t_k}{T_{\mathrm{sync}}}\right)$
\end{itemize}
\textbf{Dataset and Partitioning.} We use the MNIST handwritten digit classification benchmark with a custom data loader. Unless otherwise stated, each client $k$ receives a disjoint training partition and all clients share a common held-out test set for evaluation.
We consider (i) IID partitions and (ii) non-IID partitions generated by a Dirichlet allocation with concentration parameter $\alpha_{\mathrm{dir}}=0.5$.
Each client evaluates on the shared test set to ensure comparable time-to-quality curves across methods.

\textbf{Model.}
We train a lightweight CNN composed of two $3\times 3$ convolution layers with channel sizes $(32,64)$ and padding 1, each followed by $2\times 2$ max pooling, then a fully connected layer $64\cdot 7\cdot 7 \rightarrow 128$, dropout $p=0.5$, and a final 10-way linear classifier.
ReLU activations are used throughout.
The objective is cross-entropy and we report test accuracy as the primary quality metric.

\textbf{Hyperparameter Optimization.}
We performed a targeted hyperparameter sweep focusing on the learning rate and batch size, while holding all other training settings fixed to isolate their effects. The learning rate was explored over the set $\{0.001, 0.003, 0.005, 0.0001, 0.0005\}$, and the batch size over $\{32, 64, 96, 128\}$. Each configuration was evaluated using the same training protocol and validation metric, and the optimal setting was selected based on validation accuracy. This search identified a learning rate of $0.003$ with a batch size of $64$ as the best-performing combination for our experiments.
 To support reproducibility, Table~\ref{tab:controlled_config_and_hparams} summarizes the core hyperparameters.

\textbf{Baseline Implementation.} We implemented all the reported baseline methods using APPFL \cite{ryu2022appfl}, and varied the respective hyperparameters according to Table \ref{tab:controlled_config_and_hparams} to simulate the impact of queue variance, and admission control.

\begin{table}[!ht]
  \centering
  \scriptsize
  \caption{\textbf{Controlled experiment configuration and hyperparameters.}
  This table consolidates the controlled-workload specification (MNIST + CNN) and the concrete run configuration/hyperparameters
  used in Appendix C sweeps (queue variability, admission tolerance, and prediction quality).}
  \label{tab:controlled_config_and_hparams}
  \vspace{-0.35em}
  \begin{tabular}{p{0.30\linewidth} p{0.18\linewidth} p{0.44\linewidth}}
    \toprule
    \textbf{Parameter} & \textbf{Value} & \textbf{Role / notes} \\
    \midrule

    \multicolumn{3}{l}{\textbf{Workload (data/model/metric)}} \\
    Dataset & MNIST & Custom data loader; shared held-out test set across clients. \\
    Partition & iid / non-iid & Non-iid via Dirichlet allocation. \\
    data.alpha & 0.5 & Dirichlet concentration (for non-iid). \\
    Model & SimpleCNN & 2 conv (3$\times$3) with channels (32,64), max-pool, FC $64\cdot7\cdot7\!\to\!128$, dropout 0.5, 10-way head. \\
    loss.name & CELoss & Cross-entropy objective. \\

    \midrule
    \multicolumn{3}{l}{\textbf{Federated protocol (shared across baselines)}} \\
    seed & 42 & Reproducibility seed. \\
    num\_clients & 4 & Number of participating clients. \\
    num\_rounds & 50 & Total FL communication rounds. \\
    gpus & cuda:0,1,2,3 & GPU devices assigned to clients (local testbed / simulation). \\
    batch\_size & 64 & Client runtime config. \\
    Optimizer & Adam & Client runtime config. \\
    local\_steps & 100 & Client runtime config (per admitted update unless otherwise specified). \\
  
    \midrule
    \multicolumn{3}{l}{\textbf{\fedqueue{} orchestration / queue model knobs (swept in Appendix C)}} \\
    algo.name & fedqueue & Algorithm selector (fedqueue/fedavg/fedasync/fedbuff/fedcompass). \\
    algo.broadcast\_when & immediate / next\_round & Broadcast timing policy. \\
    algo.delay\_mode & simulate / sleep & Delay handling (simulation vs. wall-clock sleeping). \\
    Tsync & 10.0 & Target sync window ($T_{\mathrm{sync}}$). \\
    q\_init & 2.0 & Initial delay prior for admission control. \\
    gamma & 0.2 & Slack fraction on $T_{\mathrm{sync}}$ for admission windowing. \\
    delta & 2.0 & Safety buffer inside budget $J_k$ (maps to safety-buffer scaling). \\
    alpha & 0.5 & EWMA smoothing for delay estimate $\hat q_k$. \\
    warmup\_steps & 10 & Warm-up steps to estimate throughput $c_k$. \\
    sim\_queue & lognormal & Queue model (fixed / lognormal). \\
    queue\_fixed & 0.5,1.5,2.4,6 & Fixed delays per client (if sim\_queue=fixed). \\
    queue\_means & 1.5,2.5,3.5,4.5 & Lognormal means per client. \\
    queue\_rho & 0.4 & Lognormal standard deviation. \\
    slowdown & 1.0,1.0,1.0,1.0 & Compute slowdown multipliers per client. \\
    staleness\_mode & harmonic & Staleness decay (harmonic / exp). \\
    staleness\_beta & 0.5 & Staleness decay strength. \\
    admission\_horizon & horizon & Admission policy (horizon / all). \\
    client\_weight\_mode & equal & Client weight mode (equal / data size). \\
    lr\_base & 0.003 & Base learning rate. \\
    fedavg.num\_local\_steps & 67,155,147,15 & Local steps per client under FedAvg path. \\
    \midrule
    \multicolumn{3}{l}{\textbf{Global AsyncFL settings}} \\
num\_local\_steps & 155 & Number of local update steps per client \\
staleness\_fn & polynomial & Staleness decay function (constant / polynomial / hinge) \\
staleness\_fn\_kwargs & \{a=1.0\} & Polynomial decay coefficient \\
alpha & 0.5 & Staleness scaling factor \\
optimize\_memory & true & Enables memory-optimized aggregation \\
\midrule

\multicolumn{3}{l}{\textbf{FedBuff overrides}} \\
$K$ & 2/3/4 & Buffer size (number of updates before aggregation) \\
\midrule
\multicolumn{3}{l}{\textbf{FedCompass overrides}} \\
staleness\_fn & polynomial & Staleness-aware weighting function \\
staleness\_fn\_kwargs & \{\} & Uses default polynomial parameters \\
alpha & 0.5 & Staleness scaling factor \\
max\_local\_steps & 200 & Upper bound on adaptive local steps \\
min\_local\_steps & 20 & Lower bound on adaptive local steps \\
speed\_momentum & 0.6 & Momentum term for client speed estimation \\
latest\_time\_factor & 1.1 & latest allowable arrival window \\
    \bottomrule
  \end{tabular}
  \vspace{-0.9em}
\end{table}

\subsection{Queue Knobs and Sweep Design}
\label{app:controlled-sweeps}
We vary three parameters aligned with Lemma~\ref{lem:bounded-staleness-queue}:
(i) queue variability via the sub-Gaussian scale $\rho_k$,
(ii) admission tolerance $\gamma$ (implying $\tau_{\max}=\lceil 1+\gamma\rceil$),
and (iii) prediction quality via EWMA rate $\alpha$ controlling the error $e_k^{(r)}=q_k^{(r)}-\hat q_k^{(r)}$ and thus the safety-buffer requirement.
All sweeps are reported using pre-specified artifacts (Figures \ref{fig:C2_sigma_staleness}, \ref{fig:C1_gamma_ttq}, \ref{fig:C2_alpha_staleness} and Table~\ref{tab:C2_bound_grid}).

\textbf{Rho sweep (staleness concentration).}
Figure \ref{fig:C2_sigma_staleness} illustrates the empirical CDF of client arrivals beyond $T_{\mathrm{sync}}$ under increasing queue variability for both IID and non-IID data partitions. Under low queue variance ($\rho_k = 0.1$), the CDF rises sharply: over 90\% arrive before $8$ seconds. As queue variability increases, the delay distribution becomes progressively heavier tailed. For $\rho_k = 0.5$, the 90th percentile shifts to roughly $10$--$12$ seconds, while for $\rho_k = 0.9$ it extends beyond $15$ seconds. The vertical dashed line marking $T_{\mathrm{sync}}$ highlights that although the majority of updates still arrive before the cutoff, the tail mass near the boundary increases from only a few percent at $\rho_k = 0.1$ to well over 10\% at $\rho_k = 0.9$. This trend is consistent across both IID and non-IID regimes, with the non-IID case exhibiting a slightly heavier tail.

\begin{figure}[ht]
  \centering
  \includegraphics[width=\linewidth]{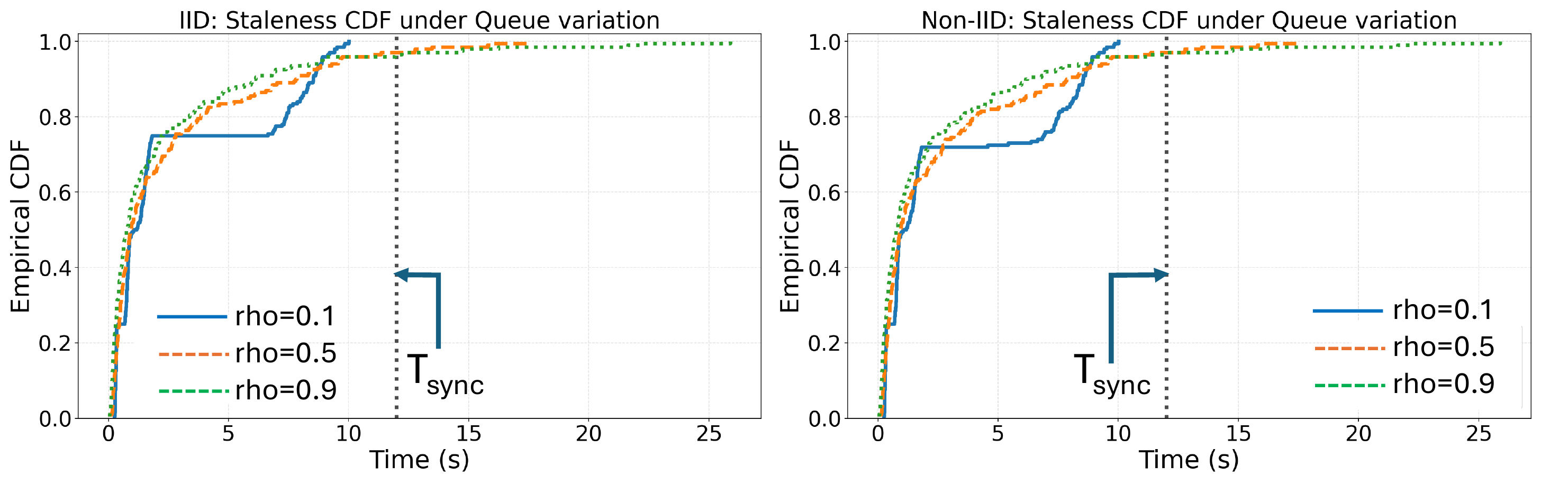}
  \caption{\textbf{Staleness under queue variance.}
  Empirical CDF of clients arrived beyond $T_{\text{sync}}$ under a sweep of $\rho$.
  This plot audits the bounded-staleness behavior induced by $\rho$.
  }
  \label{fig:C2_sigma_staleness}
  \vspace{-0.75em}
\end{figure}

\textbf{Gamma sweep (time-to-quality).}
Figure~\ref{fig:C1_gamma_ttq} shows how tightening/relaxing admission tolerance $\gamma$ affects the time-to-quality and convergence across baselines. Across all $\gamma$ values, \fedqueue{} consistently reaches high accuracy earlier than competing baselines. In the IID case, \fedqueue{} with $\gamma = 0.9$ surpasses $98\%$ validation accuracy within roughly $30$--$40$~seconds, whereas
FedBuff and FedCompass require approximately $60$--$90$~seconds to reach the same level. Under non-IID data, the gap widens: \fedqueue{} reaches $95\%$ accuracy in about $70$--$90$~seconds, while FedBuff and FedCompass typically require
$150$--$200$~seconds. Increasing $\gamma$ accelerates early convergence for all methods; however, \fedqueue{} exhibits the strongest sensitivity.

\begin{figure}[ht]
  \centering
  \includegraphics[width=\linewidth]{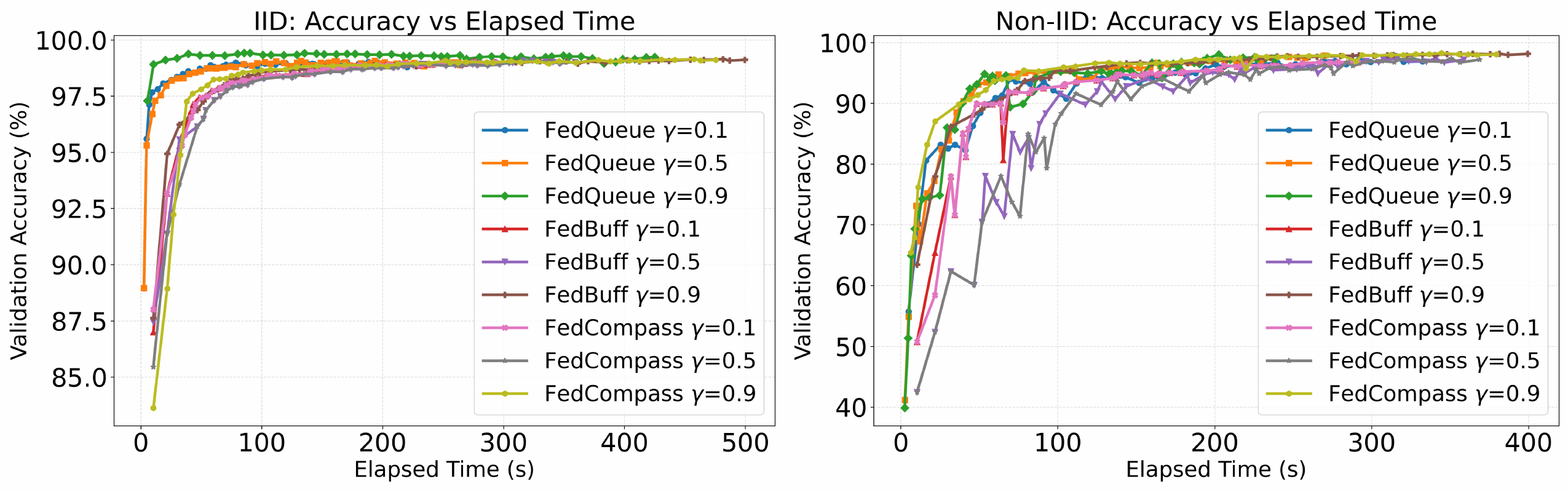}
  \caption{\textbf{Time-to-quality varying admission parameter.} Validation accuracy versus elapsed time under a sweep of admission parameter $\gamma$ (low/medium/high or a multi-level sweep). \fedqueue{} consistently converges faster and reaches the highest accuracy compared to the baselines for both IID and non-IID cases.}
  \label{fig:C1_gamma_ttq}
  \vspace{-0.75em}
\end{figure}

\textbf{Alpha sweep (staleness concentration).}
Figure~\ref{fig:C2_alpha_staleness} shows how tightening/relaxing EWMA rate $\alpha$ affects the empirical CDF of clients arriving after $T_{\text{sync}}$. Both $\alpha=0.1$ (slow tracking) and $\alpha=1.0$ (aggressive tracking) exhibit samples extending beyond the synchronization horizon $T_{\mathrm{sync}}$, indicating that overly conservative smoothing and overly reactive delay estimation can each induce late client arrivals. In contrast, the moderate choice $\alpha=0.5$ ensures all client updates arriving within the synchronization window $T_{\mathrm{sync}}$, which is consistent across both IID and non-IID settings. 

\begin{figure}[ht]
  \centering
  \includegraphics[width=\linewidth]{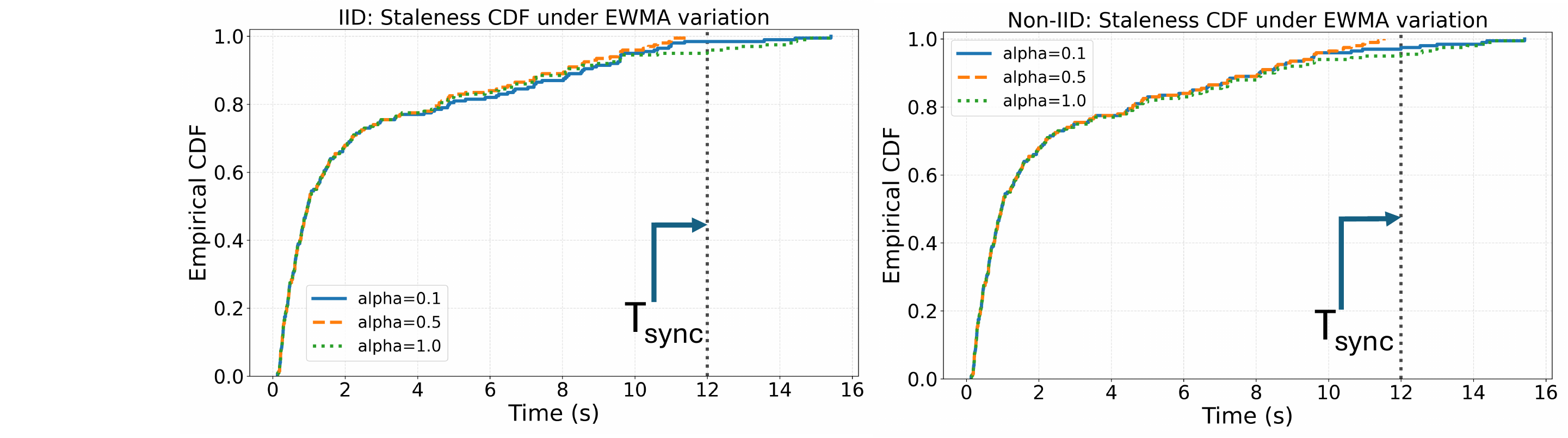}
  \caption{\textbf{Staleness under EWMA rate ($\alpha$) variation.}
  Empirical CDF of clients arrived beyond $T_{\text{sync}}$ under a sweep of $\alpha$. This plot audits the bounded-staleness behavior induced by $\alpha$.
  }
  \label{fig:C2_alpha_staleness}
  \vspace{-0.75em}
\end{figure}

\textbf{Bound verification grid.}
Table~\ref{tab:C2_bound_grid} reports violation frequencies, and maximum observed delay across a representative grid of $(\rho_k,\gamma,\alpha)$ settings, together with time-to-target.

\begin{table}[H]
  \centering
  \caption{\textbf{Bound verification grid (controlled).}
  For each setting in a representative $(\rho_k,\gamma,\alpha)$ grid, we report the 
  empirical probability of client arrival beyond the cutoff, $\mathbb{P}$, Normalized expectation of delay, $\hat{\mathbb{E}}_{d}$, Normalized maximum delay, $R_{d}$, and the wall-clock time to reach target accuracy $A^\star$ (95\%).}
  \label{tab:C2_bound_grid}
  \vspace{-0.4em}
\begin{tabular}{ccccccc}
    \toprule
    $\rho$ & $\gamma$ & $\alpha$ &
    $\mathbb{P}\downarrow$ &
    $\hat{\mathbb{E}}_{d}\downarrow$ & $R_{d}\downarrow$ &
    Time-to-$A^\star$ (s)$\downarrow$ \\
    \midrule
    0.1 & 4 & 0.5 & 0.000 & - & -  & 156.65 \\
    0.5 & 4 & 0.5 & 0.020 & 1.14 & 1.18 & 159.31 \\
    0.9 & 4 & 0.5 & 0.035 & 1.38 & 1.52 &163.95 \\
    \midrule
    0.1 & 1 & 0.5 & 0.035 & 1.32 & 1.56 &146.66 \\
    0.1 & 2 & 0.5 & 0.015 & 1.06 &1.17 &151.89 \\
    0.1 & 4 & 0.5 & 0.000 & - & - &156.65 \\
    \midrule
    0.1 & 4 & 0.1 & 0.015 & 1.32 & 1.44 &162.73 \\
    0.1 & 4 & 0.5 & 0.000 & - & - &156.65 \\
    0.1 & 4 & 1.0 & 0.020 & 1.34 & 1.46&164.62 \\
    \bottomrule
  \end{tabular}
  \vspace{-0.8em}
\end{table}

\subsection{Scalability Results}
\label{app:scalability}
We perform the controlled experiments of with $K=8$ and $K=12$, keeping high variance of the queue, $\rho=0.9$, and the non-IID partition. All other settings match Table~\ref{tab:synthetic_summary} in the main text. Table~\ref{tab:scalability} reports maximum accuracy, time to reach $85\%$ accuracy, and total local steps ($\#E_{k}$).

Across both scales, \fedqueue{} reaches the $85\%$ target fastest while using the fewest local steps. Comparing with the $K=4$ result in Table~\ref{tab:synthetic_summary}, the threshold was lowered to $85\%$ here because non-IID training at larger $K$ converges more slowly, the relative gap over FedAvg holds steady ($1.2 \times$ with $K=8$ and $1.1\times$ with $K=12$) and the gap over FedCompass widens ($1.7\times$ with $K=8$ and $1.5\times$ with $K=12$). Local step counts also remain the lowest among all baselines as $K$grows, indicating that the queue-aware budgeting continues to allocate work efficiently as the client pool expands.

\begin{table}[!ht]
\centering
\caption{Scalability under high queue variance ($\rho=0.9$) and non-IID partitioning. \fedqueue{} achieves the fastest time-to-$85\%$ and the fewest local steps with both $K=8$ and $K=12$.}
\label{tab:scalability}
\small
\begin{tabular}{lcccccc}
\toprule
& \multicolumn{3}{c}{\textit{$K=8$}} & \multicolumn{3}{c}{\textit{$K=12$}} \\
\cmidrule(lr){2-4} \cmidrule(lr){5-7}
Method & Max-A$\uparrow$ & Time-to-$A^*$ $\downarrow$ & $\#E_{k}$ $\downarrow$ & Max-A$\uparrow$ & Time-to-$A^*$ $\downarrow$ & $\#E_{k}$ $\downarrow$\\
\midrule
FedAvg & \textbf{97.4} & 106.6 & 4798 & \textbf{97.3} & 94.4 & 6682 \\
FedAsync & 90.5 & 417.3 & 19680 & 85.6 & 386.2 & 18140\\
FedBuff & 94.3 & 141.9 & 7440 & 91.4 & 142.0 & 10080\\
FedCompass & 95.8 & 153.2 & 9281 & 91.7 & 126.5 & 8991 \\
\fedqueue{} & 97.0 & \textbf{91.2} & \textbf{4288} & 96.5 & \textbf{86.7} & \textbf{5628} \\
\bottomrule
\end{tabular}
\end{table}

\subsection{Stronger Non-IID Partition Results}
\label{app:strong_non_iid}

We evluate \fedqueue{} under more severe data heterogeneity. In particular, we partition MNIST using a Dirichlet allocation with concentration parameter $\alpha_{\mathrm{dir}}=0.1$. All other settings are fixed. We compare \fedqueue{} against FedAvg and FedCompass, the two strongest synchronous and asynchronous baselines in Table~\ref{tab:dirichlet}. 

\fedqueue{} reaches the $85\%$ target $1.8\times$ faster than FedAvg and $3.2\times$ faster than FedCompass, using roughly half the local steps of FedAvg and less than a third of FedCompass. FedCompass attains the highest final accuracy at the cost of substantially more wall-clock time and local computation. The widening time-to-target gap relative to the milder $\alpha_{\mathrm{dir}}=0.5$ is consistent with the pattern that \fedqueue{}'s advantage grows as data heterogeneity intensifies.

\begin{table}[!ht]
\centering
\caption{Stronger non-IID partitioning with Dirichlet $\alpha_{\mathrm{dir}}=0.1$. Max accuracy, time-to-$85\%$, and local steps ($\#E_{k}$).}
\label{tab:dirichlet}
\small
\begin{tabular}{lccc}
\toprule
Method & Max-A$\uparrow$ & Time-to-$A^*$ $\downarrow$ & $\#E_{k}$ $\downarrow$\\
\midrule
FedAvg & 86.0 & 140.6 & 5760 \\
FedCompass & \textbf{91.1} & 252.7 & 10386 \\
\fedqueue{} & 88.7 & \textbf{78.6} & \textbf{3136} \\
\bottomrule
\end{tabular}
\end{table}

\end{document}